%
\documentclass[runningheads,hidelinks]{llncs}
\usepackage{algorithm}
\usepackage{amsfonts}
\usepackage{amsmath}
\usepackage{bussproofs}
\usepackage{etoolbox}
\usepackage{framed}
\usepackage{graphicx}
\usepackage{hyperref}
\usepackage[misc]{ifsym}
\usepackage{multirow}
\usepackage{pdflscape}
\usepackage{proof}
\usepackage{xcolor}
\usepackage{xspace,paralist}

\usepackage{vampire_induction}

\newbool{shortversion}
\boolfalse{shortversion}

\renewcommand{\paragraph}[1]{\par\smallskip\noindent\textbf{#1}}
%

\newcommand{\PH}[1]{{\color{magenta}#1}}

\newcommand{\reserved}[1]{\textbf{\underline{\texttt{#1}}}}
\definecolor{mygray}{gray}{0.9}
\newcommand{\newemph}[1]{%
\begingroup
\setlength{\fboxsep}{0pt}%
\colorbox{mygray}{\strut#1}%
\endgroup
}
\setlength{\FrameSep}{0pt}

\newcommand{\CommentedOut}[1]{}

\newtheorem{Theorem}[theorem]{Theorem}
\newtheorem{Corollary}[theorem]{Corollary}
\newtheorem{Lemma}[theorem]{Lemma}

\def\orcidID#1{\href{http://orcid.org/#1}{\raisebox{-1.25pt}{\includegraphics{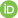}}}}

\begin{document}
\title{Program Synthesis in Saturation}
%
%
\author{Petra Hozzov\'a\inst{1}\orcidID{0000-0003-0845-5811} (\Letter) \and
Laura Kovács\inst{1}\orcidID{0000-0002-8299-2714} \and
Chase Norman\inst{2} \and
Andrei Voronkov\inst{3,4} \\
\email{petra.hozzova@tuwien.ac.at}
}
\authorrunning{Hozzov\'a et al.}
%

\institute{TU Wien \and UC Berkeley \and University of Manchester \and EasyChair}
\maketitle              


\begin{abstract}
We present an automated reasoning framework for synthesizing recursion-free programs using saturation-based theorem proving.
Given a functional specification encoded as a first-order logical formula, we use a first-order theorem prover to both establish validity of this formula and discover program fragments satisfying the specification.
As a result, when deriving a proof of program correctness, we also synthesize a program that is correct with respect to the given specification. 
We describe properties of the calculus that a saturation-based prover capable of synthesis should employ, and extend the superposition calculus in a corresponding way.
We implemented our work  in the first-order prover \vampire{}, extending the successful applicability of first-order proving to program synthesis. 
\ifbool{shortversion}{}{

This is an extended version of an Automated Deduction -- CADE 29 paper with the same title and the same authors.}%
\keywords{Program Synthesis  \and Saturation \and Superposition \and Theorem Proving.}
\end{abstract}
\section{Introduction}
Program synthesis constructs code from a given specification.
In this work we focus on synthesis using functional specifications summarized by valid first-order formulas~\cite{MannaWaldinger1980,DBLP:series/natosec/AlurBDF0JKMMRSSSSTU15}, ensuring that our programs are provably correct.
While being a powerful alternative to formal verification~\cite{DBLP:conf/popl/SrivastavaGF10}, 
synthesis faces  intrinsic computational challenges.
One of these challenges
is posed to the reasoning backend used for handling program specifications, as the latter typically include first-order quantifier alternations and interpreted theory symbols. As such, efficient reasoning with both theories and quantifiers is imperative for any effort towards program synthesis. 

In this paper we address this demand for recursion-free programs. We advocate the use of first-order theorem proving  for extracting code from correctness proofs of  functional specifications 
given as first-order formulas
$\forall \overline{x} . \exists y. F[\overline{x},y]$.
These formulas state that   ``for all (program) inputs $\overline{x}$ there exists an output $y$ such that the input-output relation (program computation) $F[\overline{x}, y]$ is valid".
Given such a specification, we synthesize a recursion-free program while also deriving a proof certifying that the program satisfies the specification.

The programs we synthesize are built using first-order theory terms  extended with  $\itecons$  constructors. 
To ensure that our programs yield  computational models, i.e., that they can be evaluated for given values of input variables $\overline{x}$, we restrict the programs we synthesize to only contain  \emph{computable} symbols.

\paragraph{Our approach in a nutshell.} 
In order to synthesize a recursion-free program, we prove its functional specification using  saturation-based theorem proving~\cite{NieuwenhuisRubio:HandbookAR:paramodulation:2001,CAV13}. We extend saturation-based proof search with answer literals~\cite{Green69}, allowing us to track substitutions into the output variable $y$ of the specification.
These substitutions correspond to the sought program fragments 
and are conditioned on clauses they are associated with in the proof.
When we derive a clause corresponding to a program branch $\ifcmd\ C\ \thencmd\ r$, where $C$ is a condition and $r$ a term and both $C, r$ are computable, we 
store it and continue proof  search assuming that $\lnot C$ holds; we refer to such conditions 
$C$ as (program) branch conditions.
The saturation 
process for both proof search and code construction  terminates when the conjunction of negations of the collected branch conditions becomes unsatisfiable. Then we synthesize the final program satisfying the given (and proved) specification by assembling the recorded program branches (see e.g. Examples~\ref{ex:inv}-\ref{ex:inv2}). 
%

The main challenges of making our approach effective come with 
(i) integrating the construction of the programs with $\itecons$ into the proof search, turning thus proof search into \emph{program search/synthesis},
and
(ii) guiding program synthesis to only computable branch conditions and programs.


\paragraph{Contributions.} We bring the next contributions solving the above challenges:\footnote{proofs of our results are given in
\ifbool{shortversion}{the extended version~\cite{VampirSyntPrePrint} of our paper}{Appendix~\ref{sec:appendix-proofs}}%
}\smallskip
\begin{compactitem}[\leftmargin=0.5mm]
    \item[$\bullet$]     We formalize the semantics for clauses with answer literals and  introduce a \emph{saturation-based algorithm for program synthesis} based on this semantics. We prove that, given a sound inference system, our saturation algorithm derives correct and computable programs  (Section~\ref{sec:synth-with-al}).
    %
   \item[$\bullet$] We define properties of a sound inference calculus in order to make the  calculus  suitable for our saturation-based algorithm for program synthesis. 
    We accordingly extend the superposition calculus and define a class of substitutions to be used within the extended calculus; we refer to these substitutions as  \emph{computable unifiers} (Section~\ref{sec:sup}). 
    \item[$\bullet$] We extend a first-order unification algorithm  to find  computable unifiers (Section~\ref{sec:uwa}) to be further used in saturation-based program synthesis. 

    \item[$\bullet$] We implement our work in the \vampire{} prover~\cite{CAV13} and evaluate our synthesis approach on a number of examples, complementing other techniques in the area (Section~\ref{sec:impl:exp}). For example, our results demonstrate the applicability of our work on synthesizing programs for specifications that cannot be even encoded in the SyGuS syntax~\cite{sygus-standard}.
\end{compactitem}


\CommentedOut{
\PH{The outline/contributions distinction feels redundant to me, can we combine it into just one paragraph?}
This paper is organized as follows.
In Section~\ref{sec:preliminaries} we describe the preliminaries, including saturation, superpositon and answer literals.
In Section~\ref{sec:synth-with-al} we define semantics for clauses with answer literals and extend the saturation algorithm to program synthesis.
In Section~\ref{sec:sup} we present an extension of the superposition calculus for synthesis and illustrate it with two examples.
In Section~\ref{sec:uwa} we describe an algorithm for finding computable unifiers.
In Section~\ref{sec:simplification} we present a few simple methods for simplifying the synthesized programs.
In Section~\ref{sec:avatar} we describe integration of our synthesis method with AVATAR.
In Section~\ref{sec:experiments} we describe our implementation and show our practical results.
Finally, in Section~\ref{sec:related} we lay out the related work, before concluding in Section~\ref{sec:conclusions}.
}
\section{Preliminaries}\label{sec:preliminaries}
We assume familiarity with standard multi-sorted first-order logic with equality.
We denote variables by $x,y$, terms by $s, t$, atoms by $A$, literals by $L$, clauses by $C, D$, formulas by $F, G$, all possibly with indices. Further, we write $\sigma$ for Skolem constants. 
We reserve the symbol $\square$ for the \textit{empty clause}
which is logically equivalent to $\bot$.
Formulas and clauses with free variables are considered implicitly universally quantified (i.e. we consider closed formulas).
By $\simeq$ we denote the equality predicate and write $t\not\simeq s$ as a shorthand for $\lnot t\simeq s$.
We use a distinguished
\emph{integer sort}, denoted by~$\intg$. When we use standard integer predicates
$<$, $\leq$, $>$, $\geq$, functions $+, -, \dots$ and constants $0, 1, \dots$,
we assume that they denote the corresponding interpreted integer predicates and functions
with their standard interpretations.
Additionally, we include a conditional term constructor $\itecons$ in the language, as follows: given a formula $F$ and terms $s, t$ of the same sort, we write  $\ite{F}{s}{t}$ to denote the term $s$ if  
$F$ is valid  and $t$ otherwise. 

 An \emph{expression} is a term, literal, clause or formula. 
We write $E[t]$ to denote that the expression $E$ contains the term $t$.
For simplicity, $E[s]$ denotes the expression $E$ where all occurrences of $t$ are replaced by the term $s$.
A \emph{substitution} $\theta$ is a mapping from variables to terms. A substitution $\theta$ is a \emph{unifier} of two expressions $E$ and $E'$ if $E\theta= E'\theta$, and is a \emph{most general unifier} (\emph{mgu}) if for every unifier $\eta$ of $E$ and $E'$, there exists substitution $\mu$ such that $\eta=\theta\mu$. We denote the mgu of $E$ and $E'$ with $\mathtt{mgu}(E,E')$.
%
We write $F_1,\ldots,F_n \vdash G_1,\ldots,G_m$ to denote
that $F_1 \land \ldots \land F_n \rightarrow G_1\lor\ldots\lor G_m$  is valid, and extend the notation also to validity modulo a theory $T$.  
Symbols occurring in a theory $T$  are 
\emph{interpreted} and all other symbols are \emph{uninterpreted}. 

\subsection{Computable Symbols and Programs} 
We distinguish between \emph{computable} and \emph{uncomputable} symbols in the signature.
The set of computable symbols  is given as part of the specification.
Intuitively, a symbol is computable if it can be evaluated and hence is allowed to occur in a synthesized program.
A term or a literal is \emph{computable} if all symbols it contains are computable.
A symbol, term or literal is \emph{uncomputable} if it is not computable.

A \emph{functional specification}, or simply just a \emph{specification},  is a formula 
\begin{equation}
    \forall \overline{x} . \exists y. F[\overline{x},y]. \label{eq:spec}
\end{equation}
The variables $\overline{x}$ of a specification~\eqref{eq:spec} are called  \emph{input variables}.
Note that while we use specifications with a single variable $y$, our work can analogously be used with a tuple of variables $\overline{y}$ in~\eqref{eq:spec}.

Let $\overline{\sigma}$ denote a tuple of Skolem constants.  Consider a computable term $r[\overline{\sigma}]$ such that the instance $F[\overline{\sigma}, r[\overline{\sigma}]]$ of~\eqref{eq:spec} holds.
Since $\overline{\sigma}$ are fresh Skolem constants, the formula $\forall \overline{x} . F[\overline{x},r[\overline{x}]]$ also holds; we call such $r[\overline{x}]$ a \emph{program} for~\eqref{eq:spec} and say that the program $r[\overline{x}]$ \emph{computes a witness} of~\eqref{eq:spec}.

Further, if $\forall \overline{x}. (F_1\land\ldots\land F_n \rightarrow  F[\overline{x},r[\overline{x}]])$ holds for  computable formulas $F_1, \dots, F_n$, we write  $\prog{r[\overline{x}]}{\bigwedge_{i=1}^n F_i}$
to refer to  a \emph{program with conditions $F_1,\dots,\allowbreak F_n$} for~\eqref{eq:spec}. 
In the sequel, 
we refer to (parts of) programs with conditions also as \emph{conditional branches}. 
%
In Section~\ref{sec:synth-with-al} we show how to build programs for~\eqref{eq:spec} by composing programs with conditions for~\eqref{eq:spec} (see Corollary~\ref{crl:2}).

\subsection{Saturation and Superposition}
Saturation-based proof search implements \emph{proving by refutation}~\cite{CAV13}: to prove validity of $F$, a saturation algorithm establishes unsatisfiability of $\neg F$.
First-order theorem provers work with clauses, rather than with arbitrary formulas.
To prove a formula $F$, first-order provers negate $F$ which is further skolemized and converted to clausal normal form (CNF). The CNF of $\lnot F$ is denoted by  $\cnf(\lnot F)$ and represents a  set $S$ of initial clauses.
First-order provers then {\it saturate} $S$ by computing  logical consequences of $S$
 with respect to a sound inference system $\mathcal{I}$.
The saturated set of $S$ is called the {\it closure} of $S$
and the process of computing the closure of $S$ is called
\textit{saturation}. 
If the closure of $S$ contains the empty clause $\square$, the original
set $S$ of clauses is unsatisfiable, and hence the formula $F$ is valid.
%

%
 We may extend the set $S$ of initial clauses with additional clauses $C_1,\dots,C_n$.
If  $C$ is derived by saturating this extended set, we say $C$ is derived from $S$ \emph{under additional assumptions} $C_1,\dots,C_n$.

\begin{figure*}[t]
\centering
\begin{framed}
\vspace*{0.5em}
\textbf{Superposition:}\\
\begin{minipage}{0.95\textwidth}
\begin{equation*}
	\infer[]{(L[t]\lor C \lor C^\prime)\theta}{\underline{s\simeq t}\lor C\quad \underline{L[s^\prime]}\lor C^\prime} \quad
	\infer[]{(u[t]\not\simeq u^\prime\lor C \lor C^\prime)\theta}{\underline{s\simeq t}\lor C\quad \underline{u[s^\prime]\not\simeq u^\prime}\lor C^\prime} \quad
	\infer[]{(u[t]\simeq u^\prime\lor C \lor C^\prime)\theta}{\underline{s\simeq t}\lor C\quad \underline{u[s^\prime]\simeq u^\prime}\lor C^\prime}
\end{equation*}
where $\theta:=\mathtt{mgu}(s,s^\prime)$; $t\theta\not\succeq s\theta$; (first rule only) $L[s^\prime]$ is not an equality literal; and (second and third rules only) $u^\prime\theta\not\succeq u[s^\prime]\theta$.
\end{minipage}
\\[0.75em]
\begin{minipage}{.25\textwidth}%
\centering%
\textbf{Binary resolution:}
\begin{equation*}
	\infer[]{(C \lor C^\prime)\theta}{\underline{A}\lor C\quad\neg \underline{A^\prime}\lor C^\prime}
\end{equation*}
where\\ $\theta:=\mathtt{mgu}(A,A^\prime)$.\end{minipage}
\begin{minipage}{.01\textwidth}
\phantom{.}
\end{minipage}\begin{minipage}{.17\textwidth}
\centering
\!\!\!\textbf{Factoring:}
\begin{equation*}
	\infer[]{(A \lor C)\theta}{\underline{A}\lor \underline{A^\prime} \lor C}
\end{equation*}
where\\ $\theta\!:=\!\mathtt{mgu}(\!A,A^\prime\!)$.\end{minipage}
\begin{minipage}{.01\textwidth}
\phantom{.}
\end{minipage}\begin{minipage}{.27\textwidth}
\centering\vspace*{-0.3em}
\!\!\textbf{Equality resolution:}
\begin{equation*}
	\infer[]{C\theta}{\underline{s\not\simeq t}\lor C}
\end{equation*}
where $\theta:=\mathtt{mgu}(s,t)$.\\ \phantom{a}
\end{minipage}\begin{minipage}{.01\textwidth}
\phantom{.}
\end{minipage}\begin{minipage}{.25\textwidth}
\centering
\textbf{Equality factoring:}
\begin{equation*}
	\infer[]{(s\simeq t\lor t\not\simeq t' \lor C)\theta}{\underline{s\simeq t}\lor \underline{s'\simeq t'} \lor C}
\end{equation*}
where $\theta:=\mathtt{mgu}(s,s')$; $t\theta\not\succeq s\theta$; and $t'\theta\not\succeq t\theta$.
\end{minipage}
\vspace*{0.5em}
\end{framed}
\vspace*{-0.5em}
\caption{The superposition calculus \Sup{}. \label{fig:sup}}
\end{figure*}

The \textit{superposition calculus}, denoted as \Sup and given in Figure~\ref{fig:sup}, is the  most common inference system  used by saturation-based provers for  first-order logic with equality~\cite{NieuwenhuisRubio:HandbookAR:paramodulation:2001}. 
The \Sup calculus is parametrized by a \emph{simplification ordering} $\succ$ on terms and a \emph{selection function}, which selects in each non-empty clause a non-empty subset of literals (possibly also positive literals).
We denote selected literals by underlining them.
An inference rule can be applied on the given premise(s) if the literals that are underlined in the rule are also selected in the premise(s).
For a certain class of selection functions, the superposition calculus \Sup{} is \textit{sound} (if $\square$ is derived from $F$, then $F$ is unsatisfiable) and
\textit{refutationally complete} (if $F$ is unsatisfiable, then $\square$ can be derived from it).

\subsection{Answer Literals}\label{sec:anslits}
%
{Answer literals}~\cite{Green69} provide a question answering technique for tracking substitutions into given variables throughout the proof.
Suppose we want to find a witness for the validity of the formula
\begin{equation}
    \exists y.F[y].\label{eq:question}
\end{equation}
Within  saturation-based proving, we first derive the skolemized negation of~\eqref{eq:question} and  add an \emph{answer literal} using a fresh predicate $\ans$ with argument  $y$, yielding
\begin{equation}
    \forall y.(\lnot F[y]\lor \ans(y)). \label{eq:qans}
\end{equation}
We then saturate the CNF of~\eqref{eq:qans}, while ensuring that answer literals are not selected for performing inferences. 
If the clause $\ans(t_1)\lor\ldots\lor\ans(t_m)$ is derived during saturation, note that this clause contains only answer literals in addition to the empty clause; hence, in this case we proved unsatisfiability of $\forall y.\lnot F[y]$, implying validity of~\eqref{eq:question}. Moreover,  $t_1,\dots,t_m$ provides a \emph{disjuntive answer}, i.e. witness,  for the validity of~\eqref{eq:question};  that is,    $F[t_1]\lor\ldots\lor F[t_m]$ holds~\cite{kunen1996semantics}.
In particular, if we derive the clause $\ans(t)$ during saturation, we
found a \emph{definite answer} $t$ for~\eqref{eq:question}, namely
$F[t]$ is valid.

\paragraph{Answer literals with $\itecons$.}\label{subsec:anslit-ite}
The derivation of disjunctive answers can be avoided by modifying the
inference rules to only derive clauses containing at most one answer literal.
One such modification is given within the $\text{A}(R)$-calculus for
binary resolution~\cite{Tammet1994}, where $R$ is a so-called strongly liftable term
restriction. The $\text{A}(R)$-calculus replaces the binary resolution
rule when both premises contain an answer literal by the following $A$-resolution rule:
\begin{equation*}
	\vcenter{\infer[(A\text{-resolution})]{(C \lor C^\prime \lor \ans(\ite{A}{r^\prime}{r}))\theta}{A\lor C\lor \ans(r)\quad\neg A^\prime\lor C^\prime\lor \ans(r^\prime)}},
\end{equation*}
where $\theta:=\mathtt{mgu}(A,A^\prime)$ and the restriction $R(\ite{A}{r^\prime}{r})$ holds.

In our work we go beyond the $A$-resolution rule and modify both the superposition calculus and the saturation algorithm to reason not only about answer literals but also about their use of  $\itecons$ terms (see Sections~\ref{sec:synth-with-al}--\ref{sec:sup}).

\section{Illustrative Example}\label{sec:motivation}
Let us illustrate our approach to program synthesis. We  
use answer literals in saturation to construct programs with conditions while proving specifications~\eqref{eq:spec}.
By adding an answer literal to the skolemized negation of~\eqref{eq:spec}, we obtain
\begin{equation}
    \forall y. (\lnot F[\overline{\sigma},y] \lor\ans(y)),
\end{equation}
where $\overline{\sigma}$ are the skolemized input variables $x$. When we derive a unit clause $\ans(r[\overline{\sigma}])$ during saturation, where $r[\overline{\sigma}]$ is a computable term, we construct a program for~\eqref{eq:spec} from the definite answer $r[\overline{\sigma}]$ by replacing $\overline{\sigma}$ with the input variables $\overline{x}$, obtaining the program $r[\overline{x}]$.
Hence, deriving computable definite answers by saturation allows us to synthesize  programs for specifications.

\begin{figure}[t]
    \centering
\begin{minipage}{0.25\linewidth}
\begin{equation}
\forall x.\, i(x)* x \simeq e \tag{A1}\label{eq:axA1}
\end{equation}
\end{minipage}
\begin{minipage}{0.04\linewidth}\,\end{minipage}
\begin{minipage}{0.22\linewidth}
\begin{equation}
\forall x.\, e* x \simeq x \tag{A2}\label{eq:axA2}
\end{equation}
\end{minipage}
\begin{minipage}{0.04\linewidth}\,\end{minipage}
\begin{minipage}{0.4\linewidth}
\begin{equation}
\forall x, y, z.\, x * (y * z) \simeq (x * y) * z \tag{A3}\label{eq:axA3}
\end{equation}
\end{minipage}
    \caption{Axioms defining a group. Uninterpreted function symbols $i(\cdot), e, *$ represent the inverse, the identity element, and the group operation, respectively.}
    \label{fig:groupax}
\end{figure}

\begin{example}\label{ex:inv}
Consider the group theory axioms (A1)--(A3) of Figure~\ref{fig:groupax}. We are interested in synthesizing a program for the following specification:
\begin{equation}
    \forall x.\exists y.\ x*y \simeq e \label{eq:spec-inv}
\end{equation}
%
In this example we assume that all symbols 
are computable.
To synthesize a program for~\eqref{eq:spec-inv},  
we add an answer literal to the skolemized negation
of~\eqref{eq:spec-inv} and convert the resulting formula to CNF (preprocessing).
We consider the set $S$ of clauses containing the obtained CNF and the axioms (A1)-(A3). We saturate $S$ using \Sup and obtain the following derivation:\footnote{For each formula in the derivation, we also list how the  formula has been derived. For example, formula~5 is the result of superposition (Sup) with formula~4 and axiom~A1, whereas binary resolution (BR) has been used to derive formula~6 from~5 and~1.}
\begin{enumerate}
\item \(\sigma*y\not\simeq e \lor \ans(y)\) \hfill [preprocessed specification]
\item \(i(x)*(x*y) \simeq e*y\) \hfill [Sup A1, A3]  
\item \(i(x)*(x*y) \simeq y\) \hfill [Sup A2, 2.] 
\item \(x*y \simeq i(i(x))*y\) \hfill [Sup 3., 3.] 
\item \(e \simeq x*i(x)\) \hfill [Sup 4., A1] 
\item \(\ans(i(\sigma))\) \hfill [BR 5., 1.]
\end{enumerate}
Using the above derivation, we construct a program for the functional
specification~\eqref{eq:spec-inv} as follows: we replace $\sigma$
in the definite answer $i(\sigma)$ by $x$, yielding the program $i(x)$. 
Note that for each input $x$, our
synthesized program
computes the inverse $i(x)$ of $x$ as an output.
In other words, our synthesized program for~\eqref{eq:spec-inv}
ensures that each group element $x$ has a right inverse $i(x)$. 
\end{example}

While Example~\ref{ex:inv} yields a definite answer within saturation-based proof search, our work supports the synthesis of more complex recursion-free programs (see Examples~\ref{ex:group-ite}--\ref{ex:inv2}) by composing program fragments derived in the program search (Section~\ref{sec:synth-with-al}) as well as by using answer literals with $\itecons$ to effectively handle disjunctive answers (Section~\ref{sec:sup}). 

\section{Program Synthesis with Answer Literals}\label{sec:synth-with-al}

We now introduce our approach to saturation-based program synthesis using answer literals (Algorithm~\ref{alg:saturation-new}). We focus on recursion-free program synthesis and present our work in a more general setting. Namely, we consider  functional specifications whose validity may depend on additional assumptions (e.g. additional program requirements) $A_1,\dots,A_n$, where each $A_i$ is a closed formula:
\begin{equation}
A_1\land\ldots\land A_n\rightarrow \forall \overline{x}.\exists y. F[\overline{x}, y] \label{eq:spec2}
\end{equation}
Note that specification~\eqref{eq:spec} is a special case of~\eqref{eq:spec2}.
However, since $A_1,\dots,A_n$ are closed formulas, \eqref{eq:spec2}~is equivalent to $\forall \overline{x}.\exists y.(A_1\land\ldots\land A_n\rightarrow  F[\overline{x}, y])$, which is a special case of~\eqref{eq:spec}.

Given a functional specification~\eqref{eq:spec2}, 
we use answer literals to synthesize programs with conditions (Section~\ref{sec:synt:pgmCond}) and extend saturation-based proof search to reason about answer literals (Section~\ref{sec:synth:SatAL}).  For doing so, 
we add  the answer literal $\ans(y)$ to the  skolemized negation of~\eqref{eq:spec2} and obtain
\begin{equation}
A_1\land\ldots\land A_n\land \forall y. (\lnot F[\overline{\sigma}, y] \lor\ans(y)).\label{eq:spec2ans}
\end{equation}
We  saturate the CNF of~\eqref{eq:spec2ans}, while ensuring that  answer literals are not selected within the inference rules used in saturation. We guide saturation-based proof search to derive clauses $C[\overline{\sigma}]\lor\ans(r[\overline{\sigma}])$, where $C[\overline{\sigma}]$ and $r[\overline{\sigma}]$ are computable.    

\subsection{From Answer Literals to Programs}\label{sec:synt:pgmCond}
Our next result 
ensures that, if we derive the clause
$C[\overline{\sigma}]\lor\ans(r[\overline{\sigma}])$, the term $r[\overline{\sigma}]$ is a definite answer under the assumption $\lnot C[\overline{\sigma}]$ (Theorem~\ref{thm:1}).
We note that we do not terminate saturation-based program synthesis once a clause  $C[\overline{\sigma}]\lor\ans(r[\overline{\sigma}])$ is derived. 
We rather record the program $r[\overline{x}]$ with condition $\lnot C[\overline{x}]$ (and possibly also other conditions), replace clause  $C[\overline{\sigma}]\lor\ans(r[\overline{\sigma}])$ by $C[\overline{\sigma}]$, and continue  saturation (Corollary~\ref{crl:1}). As a result, upon establishing validity of~\eqref{eq:spec2}, we synthesized a program for~\eqref{eq:spec2} (Corollary~\ref{crl:2}).

\newcommand\theoremone{{\textnormal{\textbf{[Semantics of Clauses with  Answer Literals]}}}
Let $C$ be a clause not containing an answer literal. Assume that, using a saturation algorithm based on a sound inference system $\mathcal{I}$, the clause  $C\lor \ans(r[\overline{\sigma}])$  is derived from  the set of clauses consisting of  initial assumptions $A_1,\dots,A_n$, the clausified formula 
$\cnf(\lnot F[\overline{\sigma}, y]\lor\ans(y))$ and additional assumptions $C_1,\dots,C_m$. Then, 
\[A_1,\dots,A_n,C_1,\dots,C_m\vdash C, F[\overline{\sigma}, r[\overline{\sigma}]].\]
That is, under the assumptions $C_1, \ldots, C_m,\lnot C$, the computable term $r[\overline{\sigma}]$ provides a definite answer to~\eqref{eq:spec2}. 
} 
\begin{Theorem}\label{thm:1}
\theoremone
\end{Theorem}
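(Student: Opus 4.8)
The plan is to argue by induction on the structure of the derivation of $C \lor \ans(r[\overline{\sigma}])$, but it is cleaner to reason directly via soundness of the inference system $\mathcal{I}$. First I would observe that the answer predicate $\ans$ is a fresh, uninterpreted predicate that occurs only in the clause $\cnf(\lnot F[\overline{\sigma},y]\lor\ans(y))$ among the initial clauses, and that $\ans$-literals are never selected and hence never resolved upon; consequently every clause $D$ appearing in the saturation either contains no answer literal, or has the form $D'\lor\ans(r_D)$ for a single term $r_D$ with $D'$ answer-literal-free. I would make precise the invariant I want to maintain: for every derived clause of the latter form,
\[
A_1,\dots,A_n,\,C_1,\dots,C_m \;\vdash\; D',\; F[\overline{\sigma},r_D].
\]
Taking $D = C\lor\ans(r[\overline{\sigma}])$ then yields exactly the claimed entailment, and rephrasing it as "$\lnot C$ together with $C_1,\dots,C_m$ implies $F[\overline{\sigma},r[\overline{\sigma}]]$" is immediate; computability of $r[\overline{\sigma}]$ is guaranteed by the hypothesis that we only derive such clauses with computable $r$.

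Next I would establish the invariant. The semantic device is to interpret $\ans$ concretely: consider any model $M$ of $A_1,\dots,A_n,C_1,\dots,C_m$, and define the interpretation of $\ans$ in $M$ by $\ans(t)$ holds iff $F[\overline{\sigma},t]$ holds in $M$. Under this interpretation the initial clause $\lnot F[\overline{\sigma},y]\lor\ans(y)$ is valid, so $M$ (so extended) is a model of the entire initial clause set. By soundness of $\mathcal{I}$, $M$ satisfies every clause in the saturation, in particular $D'\lor\ans(r_D)$; unfolding the chosen interpretation of $\ans$, this says $M \models D' \lor F[\overline{\sigma},r_D]$. Since $M$ was an arbitrary model of the assumptions, we get $A_1,\dots,A_n,C_1,\dots,C_m \vdash D', F[\overline{\sigma},r_D]$, which is the invariant. (One small technical point to note in passing: if the interpretation of $\ans$ needs to be a legitimate relation on the domain, this is fine because $F[\overline{\sigma},\cdot]$ with parameters from $M$ defines one; the extension of $M$ exists for every $M$.)

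I would then close the argument for the theorem statement: applying the above with $D' = C$ and $r_D = r[\overline{\sigma}]$ gives $A_1,\dots,A_n,C_1,\dots,C_m\vdash C, F[\overline{\sigma},r[\overline{\sigma}]]$, i.e., $A_1\land\cdots\land A_n\land C_1\land\cdots\land C_m\land\lnot C \rightarrow F[\overline{\sigma},r[\overline{\sigma}]]$ is valid (modulo the theory $T$, if the entailment is taken modulo $T$ — the same model argument goes through restricting to $T$-models). Rewriting with $\overline{x}$ in place of the Skolem constants $\overline{\sigma}$ then matches the notion of a program with conditions from Section~\ref{sec:synt:pgmCond}.

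The main obstacle, and the step warranting the most care, is the claim that $\ans$ can be consistently reinterpreted without disturbing soundness: one must check that (i) $\ans$ genuinely does not occur in $A_1,\dots,A_n$, $C_1,\dots,C_m$ or anywhere except through the designated specification clause, so that reinterpreting it cannot falsify any assumption, and (ii) the inference system $\mathcal{I}$ is sound in the strong, per-model sense used here (every derived clause is true in every model of the premises), not merely refutationally sound — this holds for \Sup{} and its extensions but should be stated, and (iii) that the restriction "answer literals are not selected" together with the shape of inferences indeed preserves the "at most one answer literal per clause" structure, so that $D'$ is genuinely answer-literal-free. Once these bookkeeping facts are in hand, the model-theoretic core is a one-line application of soundness.
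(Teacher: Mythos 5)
Your proof is correct, and it reaches the stated entailment by a genuinely different route than the paper's. You argue purely model-theoretically: take an arbitrary model $M$ of $A_1,\dots,A_n,C_1,\dots,C_m$, expand it by interpreting the fresh predicate $\ans$ as the relation defined by $F[\overline{\sigma},\cdot]$, observe that $\forall y.(\lnot F[\overline{\sigma},y]\lor\ans(y))$ is then valid, and apply soundness of $\mathcal{I}$ once to get $M\models C\lor F[\overline{\sigma},r[\overline{\sigma}]]$; since $\ans$ is fresh and $C$, $F$, $A_i$, $C_j$ are $\ans$-free, the reinterpretation disturbs nothing else. The paper instead works syntactically on the derivation: it replaces $\ans(y)$ by the literal $y\not\simeq r[\overline{\sigma}]$ throughout, replays the derivation in the calculus with ordering and selection constraints lifted (the lifting is needed exactly because the replacement turns a positive, never-selected literal into a negative one over different symbols), obtains a derivation of $C\lor r[\overline{\sigma}]\not\simeq r[\overline{\sigma}]$, and then finishes with a two-case semantic argument on whether $\forall y.\cnf(\lnot F[\overline{\sigma},y]\lor y\not\simeq r[\overline{\sigma}])$ holds in the given interpretation. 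In effect both proofs "pin down" the uninterpreted $\ans$ — the paper as the complement of $\{r[\overline{\sigma}]\}$, you as the set of witnesses of $F[\overline{\sigma},\cdot]$ — but your semantic version avoids any reasoning about the derivation itself (no replay, no constraint lifting), while the paper's version stays closer to the proof object. Both arguments need soundness in the strong per-inference sense (every model of the premises satisfies the conclusion), not merely refutational soundness — your point (ii) makes explicit what the paper leaves implicit — and both gloss the same minor point that $\cnf(\cdot)$ may introduce Skolem functions, so the expanded model must also interpret those. Your point (iii) about the single-answer-literal invariant is not actually needed: the theorem's hypothesis already hands you a derived clause of the required shape, so the invariant plays no role in the proof.
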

We further use Theorem~\ref{thm:1} to synthesize programs with conditions for~\eqref{eq:spec2}.
\newcommand\corollarytwo{{\textnormal{\textbf{[Programs with Conditions]}}}
Let $r[\overline{\sigma}]$ be a computable term and $C[\overline{\sigma}]$  a ground computable clause not containing an answer literal.   
Assume that clause $C[\overline{\sigma}]\lor \ans(r[\overline{\sigma}])$ is derived from the set of initial clauses $A_1,\dots,A_n$, the clausified formula 
$\cnf(\lnot F[\overline{\sigma}, y]\lor\ans(y))$ and additional ground computable assumptions $C_1[\overline{\sigma}],\dots,C_m[\overline{\sigma}]$, by using saturation based on a sound inference system $\mathcal{I}$. Then, 
$$\prog{r[\overline{x}]}{\bigwedge_{j=1}^{m}C_j[\overline{x}]\land\lnot C[\overline{x}]}$$
 is a program  with conditions for~\eqref{eq:spec2}.
}
\begin{Corollary}\label{crl:1}
\corollarytwo
\end{Corollary}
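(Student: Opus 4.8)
\textbf{Proof plan for Corollary~\ref{crl:1}.}
The plan is to reduce the statement to Theorem~\ref{thm:1} and then to unfold the definition of ``program with conditions''. First I would apply Theorem~\ref{thm:1} directly: since $C[\overline{\sigma}]\lor\ans(r[\overline{\sigma}])$ is derived from $A_1,\dots,A_n$, the clausified formula $\cnf(\lnot F[\overline{\sigma},y]\lor\ans(y))$, and the additional assumptions $C_1[\overline{\sigma}],\dots,C_m[\overline{\sigma}]$ using a sound inference system, the theorem gives
\[
A_1,\dots,A_n,C_1[\overline{\sigma}],\dots,C_m[\overline{\sigma}]\vdash C[\overline{\sigma}],\ F[\overline{\sigma},r[\overline{\sigma}]].
\]
Rewriting the sequent as an implication and moving $C[\overline{\sigma}]$ to the hypotheses (using that $C[\overline{\sigma}]$ is a ground clause, so $\lnot C[\overline{\sigma}]$ makes sense as a conjunction of literals), I get
\[
A_1\land\dots\land A_n\land\bigwedge_{j=1}^{m}C_j[\overline{\sigma}]\land\lnot C[\overline{\sigma}]\ \rightarrow\ F[\overline{\sigma},r[\overline{\sigma}]].
\]

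Next I would eliminate the global assumptions $A_1,\dots,A_n$: since $A_1\land\dots\land A_n$ is exactly the left-hand side of the specification~\eqref{eq:spec2}, and we are proving a statement relative to that specification, these may be discharged (equivalently, we work under the assumption that~\eqref{eq:spec2} is the formula being synthesized for, so $A_1,\dots,A_n$ are available). This leaves
\[
\bigwedge_{j=1}^{m}C_j[\overline{\sigma}]\land\lnot C[\overline{\sigma}]\ \rightarrow\ F[\overline{\sigma},r[\overline{\sigma}]].
\]

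Finally I would perform the Skolem-generalization step, exactly as in the definition of ``program'' in the Preliminaries: the constants $\overline{\sigma}$ are fresh Skolem constants not occurring in the original specification, and the term $r[\overline{\sigma}]$, the clause $C[\overline{\sigma}]$, and the formulas $C_j[\overline{\sigma}]$ all mention $\overline{\sigma}$ only in the indicated positions (this is exactly the hypothesis that $C[\overline{\sigma}]$, $C_j[\overline{\sigma}]$ are ground and that $r[\overline{\sigma}]$ is the recorded answer term). Hence from validity of the ground implication above we infer validity of
\[
\forall\overline{x}.\ \Bigl(\bigwedge_{j=1}^{m}C_j[\overline{x}]\land\lnot C[\overline{x}]\ \rightarrow\ F[\overline{x},r[\overline{x}]]\Bigr).
\]
Together with the hypotheses that $r[\overline{x}]$ is computable and that $C[\overline{x}]$ and each $C_j[\overline{x}]$ are computable formulas (so $\lnot C[\overline{x}]\land\bigwedge_j C_j[\overline{x}]$ is a conjunction of computable formulas), this is precisely the statement that $\prog{r[\overline{x}]}{\bigwedge_{j=1}^{m}C_j[\overline{x}]\land\lnot C[\overline{x}]}$ is a program with conditions for~\eqref{eq:spec2}, by the definition given in Section~\ref{sec:preliminaries}.

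The only genuinely delicate point is the Skolem-generalization step: one must check that replacing $\overline{\sigma}$ by universally quantified $\overline{x}$ is sound, which relies on $\overline{\sigma}$ being fresh (not appearing in $A_1,\dots,A_n$ nor in $F$) and on all occurrences of $\overline{\sigma}$ in $C[\overline{\sigma}]$, $C_j[\overline{\sigma}]$, $r[\overline{\sigma}]$ being the displayed ones. Everything else is a routine rewriting of a sequent into an implication plus an application of Theorem~\ref{thm:1}.
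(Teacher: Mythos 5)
Your overall route is the same as the paper's: apply Theorem~\ref{thm:1} to the derived clause, rewrite the resulting sequent as a ground implication, lift the fresh Skolem constants $\overline{\sigma}$ to universally quantified $\overline{x}$, and match the definition of a program with conditions. The Skolem-generalization step and its justification (freshness of $\overline{\sigma}$, groundness of $C$ and the $C_j$) are exactly as in the paper.

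However, the step where you ``discharge'' $A_1,\dots,A_n$ is a genuine flaw as stated. The $A_i$ are hypotheses of the specification, not validities, so you cannot simply drop them: the formula $\forall\overline{x}.\bigl(\bigwedge_{j=1}^{m}C_j[\overline{x}]\land\lnot C[\overline{x}]\rightarrow F[\overline{x},r[\overline{x}]]\bigr)$ that you display is in general not valid, and it is also not what the definition requires. Since \eqref{eq:spec2} is read as $\forall\overline{x}.\exists y.(A_1\land\ldots\land A_n\rightarrow F[\overline{x},y])$, a program with conditions for \eqref{eq:spec2} must satisfy $\forall\overline{x}.\bigl(\bigwedge_{j=1}^{m}C_j[\overline{x}]\land\lnot C[\overline{x}]\rightarrow(\bigwedge_{i=1}^{n}A_i\rightarrow F[\overline{x},r[\overline{x}]])\bigr)$; the $A_i$ stay inside the implication (this is also what the proof of Corollary~\ref{crl:2} later relies on). The paper accordingly keeps them: it generalizes $\bigwedge_i A_i\land\bigwedge_j C_j[\overline{\sigma}]\rightarrow C[\overline{\sigma}]\lor F[\overline{\sigma},r[\overline{\sigma}]]$ to $\overline{x}$ and only then rearranges it into the displayed form. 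The repair of your argument is immediate — do not drop the $A_i$, just move them to the consequent (using that they are closed, so they commute with the quantifier $\forall\overline{x}$) — and with that change your proof coincides with the paper's.
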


Note that a program with conditions $\prog{r[\overline{x}]}{\bigwedge_{j=1}^{m}C_j[\overline{x}]\land\lnot C[\overline{x}]}$ corresponds to a conditional (program) branch 
\(\ifcmd\ \bigwedge_{j=1}^{m}C_j[\overline{x}]\land\lnot C[\overline{x}]\ \thencmd\ r[\overline{x}]\): only if  the condition $\bigwedge_{j=1}^{m}C_j[\overline{x}]\land\lnot C[\overline{x}]$ is valid, then  $r[\overline{x}]$ is computed for~\eqref{eq:spec2}.  

We  use programs with conditions 
$\prog{r[\overline{x}]}{\bigwedge_{j=1}^{m}C_j[\overline{x}]\land\lnot C[\overline{x}]}$ to finally synthesize a program for~\eqref{eq:spec2}.
To this end, we use Corollary~\ref{crl:1} to derive programs with conditions, and once their conditions cover all possible cases given the initial assumptions $A_1,\dots,A_n$, we compose them into a program for~\eqref{eq:spec2}.
\newcommand\corollarythree{{\textnormal{\textbf{[From Programs with Conditions to Programs for~\eqref{eq:spec2}]}}}
Let $P_1[\overline{x}],\dots, P_k[\overline{x}]$, where $P_i[\overline{x}] = \prog{r_i[\overline{x}]}{\bigwedge_{j=1}^{i-1}C_j[\overline{x}]\land\lnot C_i[\overline{x}]}$, be programs with conditions for~\eqref{eq:spec2},
 such that $\bigwedge_{i=1}^n A_i\land\bigwedge_{i=1}^k C_i[\overline{x}]$ is unsatisfiable.
Then $P[\overline{x}]$, given by
\begin{align}
\begin{split}
P[\overline{x}] :=\ &\ifcmd \ \lnot C_1[\overline{x}]\ \thencmd\ r_1[\overline{x}] \\
&\elsecmd\ \ifcmd \ \lnot C_2[\overline{x}]\ \thencmd\ r_2[\overline{x}] \\
&\qquad \dots \\
&\elsecmd\ \ifcmd\ \lnot C_{k-1}[\overline{x}]\ \thencmd\ r_{k-1}[\overline{x}] \\
&\elsecmd\ r_k[\overline{x}],
\end{split}\label{eq:prog-construction}
\end{align}
is a program for~\eqref{eq:spec2}.
}
\begin{Corollary}\label{crl:2}
\corollarythree
\end{Corollary}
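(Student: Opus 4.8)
The plan is to prove Corollary~\ref{crl:2} directly from the definition of a program for~\eqref{eq:spec2} and the assumptions on the $P_i[\overline{x}]$. Recall that each $P_i[\overline{x}] = \prog{r_i[\overline{x}]}{\bigwedge_{j=1}^{i-1}C_j[\overline{x}]\land\lnot C_i[\overline{x}]}$ being a program with conditions for~\eqref{eq:spec2} means, by definition, that $\bigwedge_{l=1}^n A_l\rightarrow \forall\overline{x}.\bigl(\bigwedge_{j=1}^{i-1}C_j[\overline{x}]\land\lnot C_i[\overline{x}]\rightarrow F[\overline{x},r_i[\overline{x}]]\bigr)$ holds. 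I want to show $\bigwedge_{l=1}^n A_l\rightarrow \forall\overline{x}.F[\overline{x},P[\overline{x}]]$, i.e.\ that $P[\overline{x}]$ computes a witness of~\eqref{eq:spec2}, noting that $P[\overline{x}]$ is computable because each $r_i[\overline{x}]$ and each $C_i[\overline{x}]$ is.

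First I would fix the assumptions $A_1,\dots,A_n$ and an arbitrary tuple $\overline{x}$, and argue by a case split on which of the conditions $\lnot C_1[\overline{x}],\dots,\lnot C_{k-1}[\overline{x}]$ holds first. Concretely, let $i$ be the least index in $\{1,\dots,k-1\}$ such that $\lnot C_i[\overline{x}]$ holds, if such an index exists; then by minimality $C_j[\overline{x}]$ holds for all $j<i$, so the guard $\bigwedge_{j=1}^{i-1}C_j[\overline{x}]\land\lnot C_i[\overline{x}]$ of $P_i$ is satisfied, and by the definition of the evaluation of nested $\itecons$ terms, $P[\overline{x}]$ evaluates to $r_i[\overline{x}]$. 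The defining property of $P_i[\overline{x}]$ then gives $F[\overline{x},r_i[\overline{x}]]$, hence $F[\overline{x},P[\overline{x}]]$. If instead no such index exists, then $C_j[\overline{x}]$ holds for every $j\in\{1,\dots,k-1\}$; here I invoke the unsatisfiability hypothesis $\bigwedge_{i=1}^n A_i\land\bigwedge_{i=1}^k C_i[\overline{x}]$, which (since the $A_i$ are assumed) forces $\lnot C_k[\overline{x}]$. Thus the guard $\bigwedge_{j=1}^{k-1}C_j[\overline{x}]\land\lnot C_k[\overline{x}]$ of $P_k$ holds, $P[\overline{x}]$ evaluates to $r_k[\overline{x}]$ by the semantics of the final $\elsecmd$ branch, and the defining property of $P_k[\overline{x}]$ yields $F[\overline{x},r_k[\overline{x}]] = F[\overline{x},P[\overline{x}]]$. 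In both cases $F[\overline{x},P[\overline{x}]]$ holds; since $\overline{x}$ was arbitrary and we worked under $A_1,\dots,A_n$, we obtain $\bigwedge_{l=1}^n A_l\rightarrow\forall\overline{x}.F[\overline{x},P[\overline{x}]]$, which is exactly the statement that $P[\overline{x}]$ is a program for~\eqref{eq:spec2}.

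The only genuinely delicate point — and the one I would state carefully rather than gloss over — is tying the syntactic evaluation of the nested $\itecons$ cascade in~\eqref{eq:prog-construction} to the case analysis above: one must check that when $\lnot C_i[\overline{x}]$ is the first true guard, the term $P[\overline{x}]$ indeed reduces to $r_i[\overline{x}]$ under the stipulated semantics of $\ite{F}{s}{t}$ (value $s$ if $F$ valid, $t$ otherwise), which is a straightforward induction on $i$ but deserves a sentence. Everything else is bookkeeping: the computability of $P[\overline{x}]$ follows since $\itecons$ is in the language and all constituent terms and formulas are computable, and the logical manipulations are immediate from the definition of ``program with conditions'' recalled in Section~\ref{sec:synt:pgmCond}. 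I do not expect to need Theorem~\ref{thm:1} or Corollary~\ref{crl:1} directly here — those were used to \emph{produce} the $P_i$; the present corollary only \emph{composes} them.
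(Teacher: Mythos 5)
Your proposal is correct and follows essentially the same argument as the paper: pick the least index whose condition $\lnot C_i[\overline{x}]$ holds (its existence guaranteed by the unsatisfiability of $\bigwedge_{i=1}^n A_i\land\bigwedge_{i=1}^k C_i[\overline{x}]$ under the assumptions), observe that the $\itecons$ cascade then evaluates to $r_i[\overline{x}]$, and apply the defining property of the program with conditions $P_i$; your explicit two-case split for the final $\elsecmd$ branch is only a cosmetic repackaging of the paper's single least-index argument, phrased over an arbitrary $\overline{x}$ rather than over interpretations and variable assignments.
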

Note that since
the conditional branches of~\eqref{eq:prog-construction} cover all possible cases to be considered over $\overline{x}$, 
we do not need the condition $\ifcmd\ \lnot C_k$. %
In particular, if $k=1$, i.e. $\bigwedge_{i=1}^n A_i\land C_1[\overline{x}]$ is unsatisfiable, then the synthesized program for~\eqref{eq:spec2} is  $r_1[\overline{x}]$. 




\subsection{Saturation-Based Program Synthesis}\label{sec:synth:SatAL}
Our program synthesis results from Theorem~\ref{thm:1}, Corollary~\ref{crl:1} and Corollary~\ref{crl:2} 
rely upon a saturation algorithm using a sound (but not necessarily complete) inference system~$\mathcal{I}$.
In this section, we present our modifications to extend state-of-the-art saturation algorithms with answer literal reasoning, allowing to derive clauses  $C[\overline{\sigma}]\lor \ans(r[\overline{\sigma}])$, where both $C[\overline{\sigma}]$ and $r[\overline{\sigma}]$ are computable.
In Sections~\ref{sec:sup}--\ref{sec:uwa} we then describe modifications of the inference system $\mathcal{I}$ to implement  rules over clauses with answer literals. 

\begin{algorithm}[t]
\caption{Saturation Loop for Recursion-Free Program Synthesis  \label{alg:saturation-new}
}
\begin{tabbing}
	{\scriptsize 1\phantom{1}}\quad initial set of clauses $S:=\{\cnf(A_1\land\ldots\land A_n\land \forall y. (\lnot F[\overline{\sigma}, y] \lor\ans(y)))\}$\\ 
        {\scriptsize 2\phantom{1}}\quad initial sets of additional assumptions $\mathcal{C}:=\emptyset$ and programs $\mathcal{P}:=\emptyset$\\
	{\scriptsize 3\phantom{1}}\quad \reserved{\texttt{repeat}}\\
	{\scriptsize 4\phantom{1}}\quad\quad Select  clause $G\in S$\\
	{\scriptsize 5\phantom{1}}\quad\quad Derive consequences ${C_1,\ldots,C_n}$ of $G$ and formulas from $S$ using rules of $\mathcal{I}$\\
	{\scriptsize 6\phantom{1}}\quad\quad \reserved{\texttt{for each}} $C_i$ \reserved{\texttt{do}}\\
        {\scriptsize 7\phantom{1}}\quad\quad\quad \reserved{{if}} $C_i = (C[\overline{\sigma}]\lor \ans(r[\overline{\sigma}]))$ and $C[\overline{\sigma}]$ is ground and computable \reserved{\texttt{then}}\\
        {\scriptsize 8\phantom{1}}\quad\quad\quad\quad $\mathcal{P}:=\mathcal{P}\cup\{\prog{r[\overline{x}]}{\bigwedge_{C'\in\mathcal{C}}C'\land\lnot C[\overline{x}]}\}$ \qquad\qquad\qquad\qquad\;\; /* Corollary~\ref{crl:1} */\\
        {\scriptsize 9\phantom{1}}\quad\quad\quad\quad $\mathcal{C} := \mathcal{C}\cup \{C[\overline{x}]\}$\\
        {\scriptsize 10}\quad\quad\quad\quad $C_i := C[\overline{\sigma}]$\\
	{\scriptsize 11}\quad\quad $S:=S\cup \{C_1,\ldots,C_n\}$\\
	{\scriptsize 12}\quad\quad \reserved{\texttt{if}} $\square\in S$ \reserved{\texttt{then}}\\
        {\scriptsize 13}\quad\quad\quad \reserved{\texttt{return}}~ program~\eqref{eq:prog-construction} for specification~\eqref{eq:spec2},  derived from $\mathcal{P}$ \; /* Corollary~\ref{crl:2} */
\end{tabbing}%
\vspace*{-1em}
\end{algorithm}

Our saturation algorithm is given in Algorithm~\ref{alg:saturation-new}. In  a nutshell, we use Corollary~\ref{crl:1} to construct   programs from  clauses $C[\overline{\sigma}]\lor\ans(r[\overline{\sigma}])$ and replace clauses $C[\overline{\sigma}]\lor\ans(r[\overline{\sigma}])$  by $C[\overline{\sigma}]$ (lines 7--10 of Algorithm~\ref{alg:saturation-new}). 
The newly added computable assumptions $C[\overline{\sigma}]$ are used to  guide saturation towards deriving programs with conditions 
where the conditions contain $C[\overline{x}]$; these programs with conditions are used for synthesizing programs for~\eqref{eq:spec2}, as given in Corollary~\ref{crl:2}. 

Compared to a standard saturation algorithm used in first-order theorem proving (e.g. lines~4--5 of Algorithm~\ref{alg:saturation-new}), Algorithm~\ref{alg:saturation-new} implements additional steps for processing newly derived clauses $C[\overline{\sigma}]\lor\ans(r[\overline{\sigma}])$ with answer literals (lines 6-10). As a result, 
Algorithm~\ref{alg:saturation-new} establishes not only the validity of the specification~\eqref{eq:spec2}
but also synthesizes a program (lines 12-13).
Throughout the algorithm, we maintain a set $\mathcal{P}$ of programs with conditions derived so far and a set $\mathcal{C}$ of additional assumptions.
For each new clause $C_i$, we check if it is in the form $C[\overline{\sigma}]\lor\ans(r[\overline{\sigma}])$ where $C[\overline{\sigma}]$ is ground and computable (line 7).
If yes, we construct a program  with conditions 
$\prog{r[\overline{x}]}{\bigwedge_{C'\in\mathcal{C}}C'\land\lnot C[\overline{x}]}$, extend $\mathcal{C}$ with the additional assumption $C[\overline{x}]$, and replace $C_i$ by $C[\overline{\sigma}]$ (lines 8-10).
Then, when we derive the empty clause, we construct the final program as follows.
We first collect all clauses that participated in the derivation of $\square$.
We use this clause collection to filter the programs in $\mathcal{P}$ -- we only keep a program originating from a clause $C[\overline{\sigma}]\lor\ans(r[\overline{\sigma}])$ if the condition $C[\overline{\sigma}]$ was used in the proof,
obtaining programs $P_1,\dots,P_k$.
From $P_1,\dots,P_k$ we then synthesize the final program $P$ using the construction~\eqref{eq:prog-construction} from Corollary~\ref{crl:2}.

\begin{remark}\label{rem:SATComp}
Compared to~\cite{Tammet1994} where potentially large programs (with conditions) are tracked in answer literals, 
Algorithm~\ref{alg:saturation-new} removes answer literals from clauses and constructs the final program only after saturation found a refutation of the negated~\eqref{eq:spec2}. 
Our approach has two advantages: first, we do not have to keep track of potentially many large terms using $\itecons$, which might slow down saturation-based program synthesis. Second, our work can naturally be integrated with clause splitting techniques within saturation (see Section~\ref{sec:avatar}).
\end{remark}

\section{Superposition with Answer Literals}\label{sec:sup}

We note that our saturation-based program synthesis approach is not restricted to a specific calculus.  Algorithm~\ref{alg:saturation-new} can thus be used with \emph{any sound} set of inference rules, including theory-specific inference rules, e.g.~\cite{ALASCA23}, as long as the rules allow derivation of clauses in the form $C \lor \ans(r)$, where $C, r$ are computable and $C$ is ground.
I.e., the rules should only derive clauses with at most one answer literal, and should not introduce uncomputable symbols into answer literals.

In this section we  present changes tailored to the superposition calculus \Sup{}, yet, without changing the underlying saturation process of Algorithm~\ref{alg:saturation-new}. 
We first introduce the notion of an abstract unifier~\cite{UWA-THI} and define a computable unifier -- a mechanism for dealing with the uncomputable symbols in the reasoning instead of introducing them into the programs.
The use of such a unifier in any sound calculus is explained, with particular focus on the  \Sup{} calculus.

\begin{definition}[Abstract unifier~\cite{UWA-THI}]
An \emph{abstract unifier} of two expressions $E_1, E_2$ is a pair $(\theta, D)$ such that:
\begin{enumerate}
\item $\theta$ is a substitution and $D$ is a (possibly empty) disjunction of disequalities,
\item $(D\lor E_1\simeq E_2)\theta$ is valid in the underlying theory.
\end{enumerate}
\end{definition}
Intuitively speaking,  an abstract unifier combines disequality constraints $D$ with a substitution $\theta$ such that the substitution is a unifier of $E_1, E_2$ if the constraints $D$ are not satisfied.

\begin{definition}[Computable unifier]
  A \emph{computable unifier} of two expressions $E_1, E_2$ with respect to an expression $E_3$ is an abstract unifier $(\theta, D)$ of $E_1, E_2$ such that the expression $E_3\theta$ is computable.
\end{definition}

For example, let $f$ be computable and $g$ uncomputable.
Then $(\{y\mapsto f(z)\}, \allowbreak z \not\simeq g(x))$ is a computable unifier of the terms $f(g(x)),y$ with respect to $f(y)$.
Further, $(\{y\mapsto f(g(x))\}, \emptyset)$ is an abstract unifier of the same terms, but not a computable unifier with respect to $f(y)$.

\paragraph{Ensuring computability of answer literal arguments.}
We modify the rules of a sound inference system~$\mathcal{I}$ to use computable unifiers with respect to the answer literal argument instead of unifiers.
Since a computable unifier may entail disequality constraints $D$, we add $D$ to the  conclusions of the inference rules. That is, for an inference rule of $\mathcal{I}$ as below 
\begin{equation}
    \infer[,]{C\theta}{C_1 \quad \cdots \quad C_n}\label{eq:orig-rule}
\end{equation}
where $\theta$ is a substitution such that $E\theta\simeq E^\prime\theta$ holds for some expressions $E, E^\prime$, we extend $\mathcal{I}$ with the following $n$ inference rules with computable unifiers:
\begin{equation}
\infer[]{(\underline{D}\lor C\lor \ans(r))\theta'}{C_1 \lor \ans(r) \quad C_2 \quad \cdots \quad C_n}
\quad \cdots \quad
\infer[,]{(\underline{D}\lor C\lor \ans(r))\theta'}{C_1 \quad C_2 \quad \cdots \quad C_n \lor \ans(r)}\label{eq:new-rules}
\end{equation}
where $(\theta', D)$ is a computable unifier of $E, E^\prime$ with respect to $r$ and none of $C_1,\dots,C_n$ contains an answer literal.  We obtain the following result.

\newcommand\lemmathree{
{\textnormal{\textbf{[Soundness of Inferences with  Answer Literals]}}}
If the rule~\eqref{eq:orig-rule} is sound,
the rules~\eqref{eq:new-rules} are sound as well.}
\begin{Lemma}\label{lemma:soundness1}
\lemmathree
\end{Lemma}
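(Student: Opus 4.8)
The plan is to show that each rule in~\eqref{eq:new-rules} is obtained from the sound rule~\eqref{eq:orig-rule} in two conservative steps: carrying the answer literal along as passive payload, and weakening the unifier from $\theta$ to the abstract-unifier pair $(\theta', D)$. First I would observe that answer literals are ordinary literals over a fresh predicate $\ans$, so adding $\ans(r)$ to one premise and to the conclusion of~\eqref{eq:orig-rule} preserves soundness: the inference~\eqref{eq:orig-rule} is sound in the sense that $C_1, \dots, C_n \models C\theta$, and since $\ans$ does not occur in the rule's side conditions, we also have $C_1 \lor \ans(r), C_2, \dots, C_n \models (C \lor \ans(r))\theta$ — any model falsifying the conclusion either falsifies $C\theta$, hence (by soundness of the original rule) falsifies some $C_i$ with $i \geq 2$ or the $\ans(r)$-free part of the first premise, or it falsifies $\ans(r)\theta$, hence falsifies $C_1 \lor \ans(r)$. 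The symmetric argument handles attaching $\ans(r)$ to any other premise.

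Next I would handle the switch from the unifier $\theta$ to the computable unifier $(\theta', D)$. By definition of abstract unifier, $(D \lor E \simeq E')\theta'$ is valid in the underlying theory $T$. The original rule's soundness is really the statement that its conclusion is a $T$-consequence of the premises \emph{whenever the unified subterms are actually equal}; formally, from the premises together with $E\theta' \simeq E'\theta'$ one derives $(C \lor \ans(r))\theta'$ (this is exactly the instance of the generic superposition/resolution soundness argument, now applied with the substitution $\theta'$ rather than the mgu). Combining this with $T \models (D \lor E \simeq E')\theta'$ gives, by case analysis on whether $E\theta' \simeq E'\theta'$ holds, that the premises $T$-entail $(D \lor C \lor \ans(r))\theta'$ — which is precisely the conclusion of~\eqref{eq:new-rules}. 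The underlining of $D$ is irrelevant to soundness (it only affects which selection functions are used), so it can be ignored here.

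The one point needing a little care — and the place I expect to be the main obstacle — is making precise the claim that "$\theta$ is a substitution such that $E\theta \simeq E'\theta$ holds for some expressions $E, E'$'' captures exactly the way $\theta$ is used in the original rule, uniformly across all the concrete rules of \Sup{} (superposition, resolution, factoring, equality resolution, equality factoring). For each rule one should exhibit the relevant $E, E'$ (e.g. $E := s$, $E' := s'$ for superposition; $E := A$, $E' := A'$ for resolution; the two sides of the selected disequality for equality resolution) and check that the rule's conclusion is derivable from the premises plus $E\theta' \simeq E'\theta'$ using only equality reasoning and the rewriting/merging that the rule performs — in other words, that the mgu $\theta$ could be replaced by \emph{any} substitution making $E, E'$ equal without breaking the derivation of the conclusion. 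Once this "lifting to arbitrary unifying substitutions'' observation is in place, the abstract-unifier step and the answer-literal step compose to give soundness of~\eqref{eq:new-rules}, completing the proof. I would state the argument generically for the schema~\eqref{eq:orig-rule} and remark that the instantiation to each rule of \Sup{} is routine.
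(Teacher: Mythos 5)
Your proposal is correct and follows essentially the same route as the paper's proof: the answer literal is treated as a passive literal over the fresh predicate $\ans$, and the abstract-unifier property of $(\theta', D)$ is used to case-split on $D\theta'$, so that when $D\theta'$ fails the expressions $E\theta'$ and $E'\theta'$ are equal and the assumed soundness of rule~\eqref{eq:orig-rule} applies with $\theta'$ in place of $\theta$. The only difference is presentational — you decompose into two conservative steps and flag the ``any unifying substitution'' reading of rule~\eqref{eq:orig-rule} explicitly (a point the paper handles implicitly by setting $\theta:=\theta'$ inside a single model-theoretic contradiction argument), which is already granted by the lemma's hypothesis as stated.
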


\begin{figure*}[t]
\centering
\begin{framed}
\vspace*{0.5em}
\textbf{Superposition (Sup):}\\
\begin{minipage}{0.95\textwidth}
{\scriptsize
\begin{gather*}
    \infer[]
    {(\underline{D}\lor L[t]\lor C \lor C^\prime\lor\ans(\ite{s\!\simeq\! t}{r^\prime}{r}))\theta}{\underline{s\simeq t}\lor C\lor \ans(r)\quad \underline{L[s^\prime]}\lor C^\prime\lor\ans(r^\prime)} \;\;
    \infer
    []{(\underline{D} \lor \underline{r\!\not\simeq\! r^\prime}\lor L[t]\lor C \lor C^\prime \lor \ans(r))\theta}
    {\underline{s\simeq t}\lor C\lor \ans(r)\quad \underline{L[s^\prime]}\lor C^\prime\lor \ans(r^\prime)} \\
    \infer
    []{(\underline{D}\!\lor\! u[t]\!\not\simeq\! u^\prime\!\lor\! C \!\lor\! C^\prime\!\lor\!\ans(\ite{\!s\!\simeq\! t\!}{\!r^\prime\!}{\!r}))\theta}{\underline{s\simeq t}\lor C\lor\ans(r)\quad \underline{u[s^\prime]\not\simeq u^\prime}\lor C^\prime\lor\ans(r^\prime)} \;\;
    \infer
    []{(\underline{D} \!\lor\! \underline{r\!\not\simeq\! r^\prime}\!\lor\! u[t]\!\simeq\! u^\prime\!\lor\! C \!\lor\! C^\prime \!\lor\! \ans(r))\theta}
    {\underline{s\!\simeq\! t}\!\lor\! C\!\lor\! \ans(r)\quad \underline{u[s^\prime]\!\simeq\! u^\prime}\!\lor\! C^\prime\!\lor\! \ans(r^\prime)}\\
    \infer
    []{(\underline{D}\!\lor\! u[t]\!\simeq\! u^\prime\!\lor\! C \!\lor\! C^\prime\!\lor\!\ans(\ite{\!s\!\simeq\! t\!}{\!r^\prime\!}{\!r\!}))\theta}{\underline{s\simeq t}\lor C\lor\ans(r)\quad \underline{u[s^\prime]\simeq u^\prime}\lor C^\prime\lor\ans(r^\prime)} \;\;
    \infer
    []{(\underline{D} \!\lor \!\underline{r\!\not\simeq\! r^\prime}\!\lor\! u[t]\!\not\simeq\! u^\prime\!\lor\! C \!\lor\! C^\prime \!\lor\! \ans(r))\theta}
    {\underline{s\!\simeq\! t}\!\lor\! C\!\lor\! \ans(r)\quad \underline{u[s^\prime]\!\not\simeq\! u^\prime}\!\lor\! C^\prime\!\lor\! \ans(r^\prime)}
\end{gather*}}%
where $(\theta, D)$ is a computable unifier of $s,s^\prime$ w.r.t. the argument of the answer literal in the rule conclusion (i.e. $\ite{s\!\simeq\! t}{r^\prime}{r}$ for the left-column rules, and $r$ for the others); (rules on the first line only) $L[s^\prime]$ is not an equality literal; and (rules on the second and third line only) $u^\prime\theta\not\succeq u[s^\prime]\theta$.
\end{minipage}
\\[0.75em]
\begin{minipage}{.95\textwidth}
{\centering
\textbf{Binary resolution (BR):}
\begin{equation*}
	\infer
        []{(\underline{D}\!\lor\! C \!\lor\! C^\prime\!\lor\! \ans(\ite{A}{r^\prime}{r}))\theta}{\underline{A}\lor C\lor \ans(r)\quad\neg \underline{A^\prime}\lor C^\prime\lor \ans(r^\prime)} \;\;\;
        \infer
        []{(\underline{D}\lor \underline{r\!\not\simeq\! r^\prime}\lor C \lor C^\prime \lor \ans(r))\theta}
        {\underline{A}\lor C\lor \ans(r)\quad \neg \underline{A^\prime}\lor C^\prime\lor \ans(r^\prime)}
\end{equation*}}%
where $(\theta, D)$ is a computable unifier of $A,A^\prime$ w.r.t. (first rule) $\ite{A}{r^\prime}{r}$ or (second rule) $r$. \end{minipage}
\\[0.75em]
\begin{minipage}{.23\textwidth}
\centering
\textbf{Factoring (F):}
\begin{equation*}
	\infer[]{(\underline{D}\!\lor\! A \!\lor\! C\!\lor\! \ans(r))\theta}{\underline{A}\lor \underline{A^\prime} \lor C\lor \ans(r)}
\end{equation*}
where $(\theta, D)$ is a\\ computable unifier \\of $A,A^\prime$ w.r.t. $r$.\end{minipage}
\begin{minipage}{.01\textwidth}
\phantom{a}
\end{minipage}\begin{minipage}{.34\textwidth}
\centering
\textbf{Equality resolution\! (ER):}
\begin{equation*}
	\infer[]{(\underline{D}\lor C\lor \ans(r))\theta}{\underline{s\not\simeq t}\lor C\lor \ans(r)}
\end{equation*}
where $(\theta, D)$ is a\\ computable unifier\\ of $s,t$ w.r.t. $r$.
\end{minipage}\begin{minipage}{.02\textwidth}
\phantom{a}
\end{minipage}\begin{minipage}{.35\textwidth}
\centering
\textbf{Equality factoring (EF):}
\begin{equation*}
	\infer[]{(\underline{D}\!\lor\! s\!\simeq\! t\!\lor\! t\!\not\simeq\! t' \!\lor\! C\!\lor\! \ans(r))\theta}{\underline{s\simeq t}\lor \underline{s'\simeq t'} \lor C\lor \ans(r)}
\end{equation*}
where $(\theta, D)$ is a computable\\ unifier of $s,s^\prime$ w.r.t. $r$; \\ $t\theta\not\succeq s\theta$; and $t'\theta\not\succeq t\theta$.
\end{minipage}
\vspace*{0.5em}
\end{framed}
\vspace*{-0.5em}
\caption{
Selected rules of the extended superposition calculus \Sup{} for reasoning with answer literals\label{fig:sup-new}, with  underlined literals being selected.
}
\end{figure*}

We note that we keep the original rule~\eqref{eq:orig-rule} in $\mathcal{I}$, but impose that none of its premises $C_1,\dots,C_n$ contains an answer literal.
Clearly, neither the such modified rule~\eqref{eq:orig-rule} nor the new rules~\eqref{eq:new-rules} introduce uncomputable symbols into answer literals.
Rather, these rules add disequality constraints $D$ into their conclusions and immediately select $D$ for further applications of inference rules.
Such a selection guides the saturation process in Algorithm~\ref{alg:saturation-new} to first discharge the constraints $D$ containing uncomputable symbols with the aim of deriving a clause $C^\prime\lor\ans(r^\prime)$ where $C^\prime$ is computable. The clause $C^\prime\lor\ans(r^\prime)$ is then converted into a program with conditions using Corollary~\ref{crl:1}. 

\paragraph{Superposition with answer literals.} We make the inference rule modifications~\eqref{eq:orig-rule}, together with the addition of new rules~\eqref{eq:new-rules}, for each inference rule of the \Sup{} calculus from Figure~\ref{fig:sup}. 
Further, 
we also ensure that rules with multiple premises, when applied on several premises containing answer literals, \emph{derive clauses with at most one answer literal}.
We therefore introduce the following  two rule modifications. (i) We use the $\itecons$  constructor to combine  answer literals of  premises, by adapting the use of $\itecons$ within binary resolution ~\cite{LWC1974,MannaWaldinger1980,Tammet1994} to superposition rules.
(ii) We use an answer literal from only one of the rule premises in the rule conclusion and add new disequality constraint $r\not\simeq r'$ between the premises' answer literal arguments, similar to the constraints $D$ of the computable unifier.
Analogously to the computable unifier constraints, we immediately select  this disequality constraint $r\not\simeq r'$.

The resulting extension of the \Sup{} calculus with answer literals is given in Figure~\ref{fig:sup-new}.
In addition to the rules of Figure~\ref{fig:sup-new}, the extended calculus contains rules constructed as~\eqref{eq:new-rules} for superposition and binary resolution rules of Figure~\ref{fig:sup}.
Using Lemma~\ref{lemma:soundness1}, we  conclude the following. 

\newcommand\lemmafour{
{\textnormal{\textbf{[Soundness of \Sup with  Answer Literals]}}} The inference rules from Figure~\ref{fig:sup-new} of the extended \Sup{} calculus with answer literals are sound.}
\begin{Lemma}\label{lemma:soundness2}
\lemmafour
\end{Lemma}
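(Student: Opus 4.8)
The plan is to reduce the soundness of each rule in Figure~\ref{fig:sup-new} to the soundness of the corresponding base rule of \Sup{} (Figure~\ref{fig:sup}), using Lemma~\ref{lemma:soundness1} to handle the computable-unifier machinery and treating the $\itecons$-combination of answer literals separately. Recall that a rule is sound exactly when its conclusion is a logical consequence (modulo the underlying theory) of its premises. So for each rule I must show that any model of the premises also satisfies the conclusion.

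First I would dispose of the rules built from the schema~\eqref{eq:new-rules}, i.e. the ones that are explicitly mentioned as being "constructed as~\eqref{eq:new-rules} for superposition and binary resolution" and added to the calculus alongside Figure~\ref{fig:sup-new}: these are precisely instances of~\eqref{eq:new-rules} obtained from the sound base rules of Figure~\ref{fig:sup}, so Lemma~\ref{lemma:soundness1} applies directly and gives soundness. Similarly, the single-premise rules in Figure~\ref{fig:sup-new} (Factoring, Equality resolution, Equality factoring) and the left-column rule for each row plus the binary-resolution first rule only carry an answer literal in one premise (or combine two via $\itecons$), so those with a single answer-carrying premise are again just instances of~\eqref{eq:new-rules} and done by Lemma~\ref{lemma:soundness1}.

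The remaining, genuinely new cases are the rules with \emph{two} answer-literal-bearing premises: the $\itecons$-combining superposition rules (left column of each row), the $r\not\simeq r'$-constraint superposition rules (right column), and the two binary-resolution rules. For these I would argue semantically in a fixed model $M$ of the theory satisfying both premises. Consider first the $\itecons$ variants, e.g. the $A$-resolution-style rule: given $M \models (A \lor C \lor \ans(r))\eta$ and $M \models (\neg A' \lor C' \lor \ans(r'))\eta$ for an arbitrary ground instance via some $\eta$, we may compose with $\theta$ (since $(\theta,D)$ is a computable unifier we also get $D\theta$ as a disjunct, handled as in Lemma~\ref{lemma:soundness1}); if $A\theta$ holds in $M$ then from the second premise either $C'\theta$ or $\ans(r')\theta$ holds, and $\ite{A}{r'}{r}\theta$ evaluates to $r'\theta$, so $\ans(\ite{A}{r'}{r})\theta = \ans(r')\theta$; symmetrically if $A\theta$ is false we use the first premise and the $\itecons$ term evaluates to $r\theta$. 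In both branches the disjunction in the conclusion is satisfied. The superposition $\itecons$ rules are analogous, with the equality literal $s\simeq t$ playing the role of the guard of the $\itecons$: the standard superposition argument shows one of $L[t]\theta$, $C\theta$, $C'\theta$ holds under the case split on whether $s\theta\simeq t\theta$, and in each case the $\itecons$ resolves to exactly the answer term coming from the premise that was "used", so $\ans(\ite{s\simeq t}{r'}{r})\theta$ equals that premise's answer literal. For the right-column variants carrying $r\not\simeq r'$: if $r\theta\simeq r'\theta$ fails in $M$ the literal $r\not\simeq r'$ in the conclusion is true and we are done; if $r\theta\simeq r'\theta$ holds, then $\ans(r)\theta$ and $\ans(r')\theta$ are interchangeable, so the conclusion reduces to the base superposition/resolution conclusion (with $\ans(r)$ appended), which is sound by the soundness of the base rule.

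The main obstacle I anticipate is purely bookkeeping rather than conceptual: correctly interleaving the computable-unifier side condition (the extra disjunct $D\theta$ and the fact that $(D \lor s\simeq t)\theta'$, resp. $(D\lor A\simeq A')\theta'$, is theory-valid) with the case analysis on the $\itecons$ guard, and making sure that in every branch the answer literal in the conclusion semantically coincides with whichever premise's answer literal the base-rule argument "selected". Once the reduction to Lemma~\ref{lemma:soundness1} is set up for the one-answer-literal rules and the two-premise rules are handled by the explicit $M$-level case split above, soundness of the full Figure~\ref{fig:sup-new} follows, since soundness is a property of each rule individually and the extended calculus is just the union of these rules together with the (modified, answer-literal-free-premise) base rules.
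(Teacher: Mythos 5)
Your proposal is correct and follows essentially the same route as the paper: the single-answer-literal rules are discharged via Lemma~\ref{lemma:soundness1}, and the two-answer-literal superposition and binary-resolution rules are handled by a semantic case analysis in a fixed model --- splitting on the $\itecons$ guard (resp.\ on whether $r\theta\simeq r'\theta$ holds, which either satisfies the $r\not\simeq r'$ literal or makes the two answer literals interchangeable) after using falsity of $D\theta$ to conclude that the unified expressions agree. The paper merely phrases the same argument contrapositively (assuming a model of the premises falsifying the conclusion and deriving a contradiction for one representative rule of each kind, with the rest declared analogous), so the difference is presentational, not substantive.
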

%

By the soundness results of Lemmas~\ref{lemma:soundness1}--\ref{lemma:soundness2}, Corollaries~\ref{crl:1}--\ref{crl:2} imply that, when applying the calculus of Figure~\ref{fig:sup-new} in the saturation-based program synthesis approach of Algorithm~\ref{alg:saturation-new}, we construct correct programs.

\CommentedOut{
Note that while certain problems can be solved without using the superposition and binary resolution rules with $\itecons$ or without the rules without $\itecons$, there are problems that require the $\itecons$ rules.
Consider the following set of clauses (all symbols are computable):
\begin{gather*}
    \underline{p(x)} \lor q(x) \lor \ans(x) \\
    \underline{\lnot p(x)} \lor q(x) \lor \ans(f(x)) \\
    \underline{\lnot q(\sigma)}
\end{gather*}
Since the literals $q(x)$ are not selected in the first two clauses, we can only apply binary resolution on the first two clauses.
Using the $\itecons$-free binary resolution we obtain the clause $x\not\simeq f(x)\lor q(x)\lor \ans(x)$, but there is no way to discharge the literal $x\not\simeq f(x)$.
On the other hand, if we use the binary resolution with $\itecons$, we obtain $q(x)\lor\ans(\ite{p(x)}{f(x)}{x})$, which after another binary resolution step with $\lnot q(\sigma)$ becomes $\ans(\ite{p(\sigma)}{f(\sigma)}{\sigma})$ and concludes the program search.
We therefore need the $\itecons$-versions of the rules to produce a program.

In the following examples we use the modified calculus.
}
\begin{example}\label{ex:group-ite}
We illustrate the use of Algorithm~\ref{alg:saturation-new} with the extended \Sup{} calculus of Figure~\ref{fig:sup-new}, strengthening our motivation from Section~\ref{sec:motivation} with $\itecons$ reasoning. 
%
To this end, consider the functional specification over group theory:
\begin{equation}
    \forall x,y.\exists z.(x*y\not\simeq y*x \rightarrow z*z\not\simeq e), \label{eq:spec-ex2}
\end{equation}
asserting that, if the group is not commutative, there is an element whose square is not $e$. 
In addition to the 
 axioms (A1)-(A3) of Figure~\ref{fig:groupax}, we also use the right identity axiom (A2')~$\forall x.\ x*e\simeq x$.\footnote{We  include  axiom (A2') only to shorten the presentation of the obtained derivation.}
%
%
%
%
Based on Algorithm~\ref{alg:saturation-new}, we obtain the following derivation of the program for~\eqref{eq:spec-ex2}: 

{
\begin{enumerate}
    \item $\sigma_1*\sigma_2\not\simeq \sigma_2 * \sigma_1 \lor \ans(z)$ \hfill [preprocessed specification]
    \item $e \simeq z*z \lor \ans(z)$ \hfill [preprocessed specification]
    \item $\sigma_1*\sigma_2\not\simeq \sigma_2 * \sigma_1$ \hfill [answer literal removal 1. (Algorithm~\ref{alg:saturation-new}, line 10)]
    \item $x*(x*y) \simeq e*y \lor\ans(x)$ \hfill [Sup 2., A3]
    \item $e \simeq x*(y*(x*y))\lor\ans(x*y)$ \hfill [Sup A3, 2.]
    \item $x*(x*y) \simeq y\lor\ans(x)$ \hfill [Sup 4., A2]
    \item $x*e \simeq y*(x*y)\lor\ans(\ite{e \simeq x*(y*(x*y))}{x}{x*y})$ \hfill [Sup 6., 5.]
    \item $y*(x*y) \simeq x\lor\ans(\ite{e \simeq x*(y*(x*y))}{x}{x*y})$ \hfill [Sup 7., A2']
    \item $x*y \simeq y*x \lor\ans(\ite{x*(y*x)\simeq y}{x}{\ite{e \simeq x*(y*(x*y))}{x}{x*y}})$ \hfill [Sup 6., 8.]
    \item $\ans(\ite{\sigma_1*(\sigma_2*\sigma_1)\simeq \sigma_2}{\sigma_1}{\ite{e \simeq \sigma_1*(\sigma_2*(\sigma_1*\sigma_2))}{\sigma_1}{\\\sigma_1*\sigma_2}})$  \hfill [BR 9., 3.]
    \item $\square$ \hfill [answer literal removal 11.  (Algorithm~\ref{alg:saturation-new}, line 10)]
\end{enumerate}}%
The programs with conditions collected during saturation-based program synthesis, in particular corresponding to steps 3. and 11. above, are:
{\small
\begin{align*}
P_1[x,y] :=&\ \langle z, x*y \simeq y*x \rangle \\
P_2[x,y] :=&\ \langle \ite{x*(y*x)\simeq y}{x}{(\ite{e \simeq x*(y*(x*y))}{x}{x*y})}, \\
    &\;\; x*y \not\simeq y*x \rangle
\end{align*}}%
Note the variable $z$, representing an arbitrary witness, in $P_1[x, y]$.
An arbitrary value is a correct witness in case  $x*y\simeq y*x$ holds, as in this case~\eqref{eq:spec-ex2} is trivially satisfied. Thus, we do not need to consider the case $x*y\simeq y*x$ separately. 
Hence, we construct the final program $P[x,y]$ only from $P_2[x,y]$ and obtain: 
    \begin{align*}
        P[x,y] := &\ \ite{x\!*\!(y\!*\!x)\!\simeq\! x}{x}{(\ite{e\!\simeq\! x\!*\!(y\!*\!(x\!*\!y))}{x}{x\!*\!y})}
    \end{align*}

    
\end{example}

We conclude this section by illustrating the benefits of computable unifiers.
\begin{example}\label{ex:inv2}
Consider the  group theory  specification
\begin{equation}
    \forall x, y.\exists z.\ z*(i(x)*i(y)) = e, \label{eq:spec-ex3}
\end{equation}
describing the inverse element $z$ of $i(x)*i(y)$.
We annotate the inverse $i(\cdot)$ as uncomputable to disallow the trivial solution $i(i(x)*i(y))$.
Using computable unifiers, we synthesize\ifbool{shortversion}{}{\footnote{see derivation in Appendix~\ref{sec:appendix-examples}}} 
the program $y*x$; that is, a program computing $y*x$ as the inverse of $i(x)*i(y)$. 
\end{example}

\section{Computable Unification with Abstraction}\label{sec:uwa}
\begin{algorithm}[t]
\caption{Computable Unification with Abstraction\label{alg:uwa}}
\begin{tabbing}
  \newemph{\reserved{function} $\mathtt{mgu_{comp}}(E_1, E_2, E_3)$} \\
  \newemph{\quad \reserved{if} $E_3$ is uncomputable \reserved{then} fail} \\
  \quad let $\mathcal{E}$ be a set of equations and $\theta$ be a substitution;\  $\mathcal{E} := \{E_1=E_2\}$; $\theta := \{\}$ \\
  \newemph{\quad let $\mathcal{D}$ be a set of disequalities; $\mathcal{D} := \emptyset$} \\
	\quad \reserved{repeat}\\
  \quad\quad \reserved{if} $\mathcal{E}$ is empty \reserved{then}\\
  \newemph{\quad\quad\quad \reserved{return} $(\theta, D)$ where $D$ is the disjunction of literals in $\mathcal{D}$} \\
  \quad\quad Select an equation $s=t$ in $\mathcal{E}$ and remove it from $\mathcal{E}$\\
  \quad\quad \reserved{if} $s$ coincides with $t$ \reserved{then} do nothing\\
  \quad\quad \reserved{else if} $s$ is a variable and $s$ does not occur in $t$ \reserved{then}\\
  \newemph{\quad\quad\quad \reserved{if} $s$ does not occur in $E_3$ or $t$ is computable \reserved{then} $\theta\!:=\!\theta\!\circ\!\{s\!\mapsto\!t\}; \mathcal{E}\!=\!\mathcal{E}\{s\!\mapsto\!t\}$} \\
  \newemph{\quad\quad\quad \reserved{else if} $t = f(t_1,\dots,t_n)$ and $f$ is computable \reserved{then}} \\
  \newemph{\quad\quad\quad\quad $\theta\! :=\! \theta\!\circ\!\{s\!\mapsto\! f(x_1,\dots,x_n)\};\ \mathcal{E}\!:=\!\mathcal{E}\{s\!\mapsto\! f(x_1,\dots,x_n)\}\!\cup\!\{x_1\!=\!t_1,\dots,x_n\!=\!t_n\}$} \\
  \newemph{\quad\quad\quad\quad\quad where $x_1,\dots,x_n$ are fresh variables}\\
  \newemph{\quad\quad\quad \reserved{else if} $t = f(t_1,\dots,t_n)$ and $f$ is uncomputable \reserved{then} $\mathcal{D}:=\mathcal{D}\cup\{s \not\simeq t\}$} \\
  \quad\quad \reserved{else if} $s$ is a variable and $s$ occurs in $t$ \reserved{then} fail\\
  \quad\quad \reserved{else if} $t$ is a variable \reserved{then} $\mathcal{E}:=\mathcal{E}\cup\{t=s\}$\\
  \quad\quad \reserved{else if} $s$ and $t$ have different top-level symbols \reserved{then} fail\\
  \quad\quad \reserved{else if} $s\!=\!f(s_1,\dots,s_n)$ and $t\!=\!f(t_1,\dots,t_n)$ \reserved{then} $\mathcal{E}\!:=\!\mathcal{E}\!\cup\!\{s_1\!=\!t_1,\dots,s_n\!=\!t_n\}$
\end{tabbing}
\vspace*{-1em}
\end{algorithm}

When compared to the  \Sup{} calculus of Figure~\ref{fig:sup}, our extended \Sup{} calculus with answer literals from Figure~\ref{fig:sup-new} uses computable unifiers instead of mgus.
To find computable unifiers, we introduce Algorithm~\ref{alg:uwa} by extending a standard  unification algorithm~\cite{Robinson65,DBLP:conf/ki/HoderV09} and an algorithm for unification with abstraction of~\cite{UWA-THI}.   Algorithm~\ref{alg:uwa} combines
 computable unifiers with mgu computation, resulting in the computable unifier $\theta:=\mathtt{mgu_{comp}}(E_1, E_2, E_3)$ to be further used in Figure~\ref{fig:sup-new}. 
%

Algorithm~\ref{alg:uwa} modifies a standard unification algorithm to ensure computability of $E_3\theta$.
Changes compared to a standard unification algorithm are highlighted. 
Algorithm~\ref{alg:uwa} does not add $s \mapsto t$ to $\theta$ if $s$ is a variable in $E_3$ and $t$ is uncomputable.
Instead, if $t$ is $f(t_1,\dots,t_n)$ where $f$ is computable but not all $t_1,\dots,t_n$ are computable, we extend $\theta$ by $s\mapsto f(x_1,\dots,x_n)$ and then add equations $x_1=t_1,\dots,x_n=t_n$ to the set of equations $\mathcal{E}$ to be processed.
Otherwise, $f$ is uncomputable and we perform an abstraction: we consider $s$ and $t$ to be unified under the condition that $s \simeq t$ holds.
Therefore we add a constraint $s \not\simeq t$ to the set of literals $\mathcal{D}$ which will be added to any clause invoking the computable unifier.
To discharge the literal $s \not\simeq t$, one must prove $s \simeq t$.
While $s$ can be later substituted for other terms, as long as we use 
$\mathtt{mgu_{comp}}$, $s$ will never be substituted for an uncomputable term.
Thus, we conclude the following result. 


\newcommand\theoremfive{
Let $E_1, E_2, E_3$ be expressions.
Then $(\theta, D) := \mathtt{mgu_{comp}}(E_1, E_2, E_3)$ is a computable unifier.
}
\begin{theorem}\label{thm:2}
\theoremfive
\end{theorem}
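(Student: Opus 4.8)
The plan is to prove that $(\theta, D) := \mathtt{mgu_{comp}}(E_1, E_2, E_3)$ is a computable unifier by establishing the two defining conditions of an abstract unifier --- namely that $(D \lor E_1 \simeq E_2)\theta$ is valid in the underlying theory, and that $\theta$ is a substitution with $D$ a disjunction of disequalities --- together with the extra requirement that $E_3\theta$ is computable. I would structure the argument around a loop invariant for the \reserved{repeat} loop of Algorithm~\ref{alg:uwa}. First I would set up notation: at each point in the execution, let $\mathcal{E}$, $\theta$, $\mathcal{D}$ be the current values, and let $D_{\mathcal{D}}$ denote the disjunction of the disequalities in $\mathcal{D}$. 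I would also argue that the algorithm terminates --- this follows by the standard argument for unification (a lexicographic measure on the multiset $\mathcal{E}$: number of distinct variables, then total symbol count), noting that the new branches either eliminate a variable from $E_3$'s influence or replace an equation $s=t$ with strictly smaller equations $x_i = t_i$, or move $s \not\simeq t$ out to $\mathcal{D}$ and shrink $\mathcal{E}$.

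Next I would state the loop invariant, which has three parts. (i) \emph{Soundness/partial correctness:} for every substitution $\eta$, if $\eta$ makes $D_{\mathcal{D}}$ false and unifies every equation in $\mathcal{E}$ (i.e. $s\eta = t\eta$ for each $s=t \in \mathcal{E}$), then $\theta\eta$ unifies $E_1$ and $E_2$; conversely one tracks enough to ensure the returned pair really does satisfy $(D \lor E_1 \simeq E_2)\theta$. Actually the cleaner formulation: $(D_{\mathcal{D}} \lor \bigwedge_{(s=t)\in\mathcal{E}} s\theta \simeq t\theta \;\rightarrow\; E_1\theta \simeq E_2\theta)$ is valid, maintained because each step either applies a sound substitution extension or replaces a conjunct by equivalent/stronger conjuncts (decomposition $f(\bar s)=f(\bar t)$ into $s_i=t_i$, orientation, the fresh-variable split $s\mapsto f(\bar x)$ with added equations $x_i=t_i$), or weakens by moving an $s=t$ into $\mathcal{D}$ as $s\not\simeq t$ --- which is exactly the abstraction step and is sound because $(s\not\simeq t) \lor (s = t)$ is a tautology. (ii) \emph{Computability of $E_3$:} the substitution $\theta$ never maps a variable occurring in $E_3$ to an uncomputable term; more precisely, $E_3\theta$ contains no uncomputable symbol that was not already forced --- but since the guard at the top fails immediately if $E_3$ is uncomputable, and every extension $s\mapsto t$ with $s \in \mathrm{vars}(E_3\text{-image})$ uses only a computable $t$ (either $t$ computable by the explicit check, or $t = f(\bar x)$ with $f$ computable and fresh computable-placeholder variables $x_i$), $E_3\theta$ stays computable throughout. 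The fresh variables $x_i$ introduced are themselves not in $E_3$, so subsequent substitutions into them are unconstrained, which is why the abstraction of their uncomputable subterms into $\mathcal{D}$ causes no problem. (iii) A well-formedness invariant: $\mathcal{D}$ is a set of disequalities, $\theta$ idempotent on processed variables, etc.

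Then I would do the case analysis: check that each of the (finitely many) branches of the loop body preserves the invariant. The trivial ($s$ coincides with $t$), delete, orient ($t$ a variable), decompose ($f(\bar s)=f(\bar t)$), clash (different top symbols --- $\mathcal{E}$ unsatisfiable, so vacuously fine on failure, or rather we just need that $\mathtt{mgu_{comp}}$ returning a pair implies the conclusion; a \reserved{fail} produces no pair), and occurs-check branches are exactly as in standard unification. The genuinely new cases are the three sub-branches of ``$s$ is a variable and $s$ does not occur in $t$'': the ordinary substitution extension (guarded by ``$s \notin E_3$ or $t$ computable''), the computable-head split ($t = f(\bar t)$, $f$ computable), and the uncomputable-head abstraction ($t = f(\bar t)$, $f$ uncomputable, push $s\not\simeq t$ to $\mathcal{D}$). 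For each I verify both the soundness conjunct-rewriting and the computability preservation. Finally, at loop exit $\mathcal{E} = \emptyset$, so the invariant's implication becomes $D_{\mathcal{D}}\theta \lor \top \rightarrow E_1\theta \simeq E_2\theta$ --- wait, more carefully, with $\mathcal{E}$ empty the conjunction over $\mathcal{E}$ is vacuously true, so the invariant gives $D \lor E_1\theta \simeq E_2\theta$ valid (taking $D = D_{\mathcal{D}}\theta$, and noting $\theta$ applied to the disequalities, which is a routine bookkeeping point about when $\theta$ is applied to $\mathcal{D}$), and part (ii) gives $E_3\theta$ computable; together these are precisely the definition of a computable unifier of $E_1, E_2$ w.r.t. $E_3$.

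I expect the main obstacle to be the \emph{precise formulation of the soundness invariant} so that it threads correctly through the fresh-variable split and the abstraction step simultaneously: in the split $s \mapsto f(x_1,\dots,x_n)$ one must be careful that the added equations $x_i = t_i$ together with the substitution exactly recover $s = f(\bar t)$, and that the $x_i$ being fresh (hence absent from $E_1\theta, E_2\theta, E_3\theta$ modulo the new occurrence via $s$) is what legitimizes later abstracting an uncomputable $t_i$ into $\mathcal{D}$ without breaking computability of $E_3\theta$. A secondary subtlety is the exact meaning of ``$D$'' in the returned pair versus the working set $\mathcal{D}$ --- whether $\theta$ should be applied to the disequalities and in what order --- which must be pinned down to match the abstract-unifier condition $(D \lor E_1 \simeq E_2)\theta$ as stated; I would handle this by defining $D$ in the conclusion to be the disjunction of $\mathcal{D}$ read off at exit and checking that the relevant variables in $\mathcal{D}$ are not further substituted, or alternatively by carrying $\mathcal{D}$ under $\theta$ consistently throughout the invariant.
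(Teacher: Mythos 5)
Your overall route is a legitimate, more detailed rendering of what the paper proves more tersely: for the abstract-unifier half the paper argues positionally (any position where $E_1\theta$ and $E_2\theta$ could disagree in their top symbol can only arise from an abstraction step, whose disequality was put into $D$, so $(D\lor E_1\simeq E_2)\theta$ is valid), whereas you thread a Hoare-style invariant (``if the disequalities of $\mathcal{D}$ are false and the remaining equations of $\mathcal{E}$ hold, then $E_1\theta\simeq E_2\theta$'') through every branch of the loop; both are fine, and your explicit bookkeeping of when $\theta$ is applied to $\mathcal{D}$ is harmless since the abstract-unifier condition applies $\theta$ to $D$ at the end anyway. Your termination digression is not needed (the theorem, like the paper's proof, only concerns a successfully returned pair), and your proposed measure does not obviously decrease on the computable-head split, which trades one unsolved variable for $n$ fresh ones; drop it rather than rely on it.

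The genuine problem is in your computability invariant (ii), where you assert two incompatible things: that every extension $s\mapsto t$ with $s$ occurring in the current image of $E_3$ uses a computable $t$, and that the fresh variables $x_i$ ``are not in $E_3$, so subsequent substitutions into them are unconstrained''. After the split $s\mapsto f(x_1,\dots,x_n)$ for a variable $s$ of $E_3$, the $x_i$ \emph{do} occur in $E_3\theta$; if a later equation $x_i=t_i$ with uncomputable $t_i$ is then handled by the unconstrained first branch (as your reading of the occurrence check dictates), the binding $x_i\mapsto t_i$ makes $E_3\theta$ uncomputable and your invariant fails: unifying $f(g(x))$ with $y$ w.r.t.\ $E_3=f(y)$ (with $f$ computable, $g$ uncomputable) would return $\{y\mapsto f(g(x)),\,z\mapsto g(x)\}$ with empty $D$, whereas the paper's own Section~\ref{sec:sup} example states the intended result $(\{y\mapsto f(z)\},\, z\not\simeq g(x))$. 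The abstraction into $\mathcal{D}$ that you invoke for the uncomputable $t_i$ is only triggered if the occurrence check protects the fresh $x_i$, i.e.\ if ``$s$ does not occur in $E_3$'' is read against the current $E_3\theta$ (equivalently, $E_3$ is updated under $\theta$ as the loop runs); that is also the reading under which the paper's short argument (``$\theta$ maps $s$ to an uncomputable $t$ only if $s$ does not occur in $E_3$, hence $E_3\theta$ is computable'') goes through inductively. So state invariant (ii) simply as ``$E_3\theta$ is computable'', observe that it forces the protection to propagate to the fresh variables, and delete the claim that substitutions into them are unconstrained.
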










\section{Implementation and Experiments}\label{sec:impl:exp}\label{sec:simplification}
\label{sec:experiments}\label{sec:avatar}
\paragraph{Implementation.} We implemented our saturation-based program synthesis approach in the 
 \vampire{}  prover~\cite{CAV13}. We used Algorithm~\ref{alg:saturation-new} with 
 the extended \Sup calculus of Figure~\ref{fig:sup-new}.
%
%
The implementation, consisting of approximately 1100 lines of C++ code, is available  at \url{https://github.com/vprover/vampire}.
The synthesis functionality can be turned on using the option \texttt{--question\_answering synthesis}.

\vampire{} accepts functional specifications in an extension of the SMT-LIB2 format~\cite{SMTLIB-standard}, by using the new command \texttt{assert-not} to mark the specification. 
We consider  interpreted theory symbols 
to be computable.
Uninterpreted symbols can be annotated as uncomputable via  the command \texttt{(set-option\linebreak :uncomputable (symbol1 ... symbolN))}.


\CommentedOut{One of the keys to the efficiency of saturation-based theorem proving is \emph{clause splitting}, with the leading approach being the AVATAR architechture~\cite{AVATAR14}.
\CommentedOut{The main idea of splitting is as follows.
Let $S$ be a set of clauses and $C_1\lor C_2$ a clause such that $C_1, C_2$ have disjoint sets of variables.
We call such clauses $C_1, C_2$ the \emph{components} of $C_1\lor C_2$.
Then $S\cup \{C_1\lor C_2\}$ is unsatisfiable iff both $S\cup\{C_1\}$ and $S\cup\{C_2\}$ are unsatisfiable.
Therefore, instead of checking satisfiability of a set of large clauses, we  check the satisfiability of multiple sets of smaller clauses.
AVATAR implements this idea by using an interplay between a saturation-based first-order theorem prover and a SAT/SMT solver.
The SAT/SMT solver finds a set of clause components, satisfiability of which implies satisfiability of all split clauses.
These components, called \emph{assertions}, are then used by the theorem prover for further derivations in saturation.
All clauses derived using assertions $C_1,\dots,C_n$ are called \emph{clauses with assertion} $C_1,\dots,C_n$.

However, when using Algorithm~\ref{alg:saturation-new} for program synthesis with (standard) AVATAR to saturate a preprocessed specification~\eqref{eq:spec2ans}, we may  derive a clause $\ans(r[\overline{\sigma}])$ with assertions $C_1[\overline{\sigma}],\dots,C_m[\overline{\sigma}]$. By Theorem~\ref{thm:1}, we then obtain $$A_1,\dots,A_n,C_1[\overline{\sigma}],\dots,C_m[\overline{\sigma}]\vdash F[\overline{\sigma},r[\overline{\sigma}]].$$
If $C_1[\overline{\sigma}],\dots,C_m[\overline{\sigma}]$ are computable and ground, then $\prog{r[\overline{x}]}{\bigwedge_{i=1}^{m} C_i[\overline{x}]}$ is a program with conditions.
However, if not all of the assertions $C_1[\overline{\sigma}],\dots,C_m[\overline{\sigma}]$ are computable and ground, then Algorithm~\ref{alg:saturation-new} should continue reasoning with these assertions with the aim of reducing them to computable and ground literals.
This, however, is not directly possible in the AVATAR framework.
}
%
%

To preclude this limitation of using AVATAR in saturation-based program synthesis,

}

Our implementation also integrates Algorithm~\ref{alg:saturation-new} with the AVATAR architecture~\cite{AVATAR14}.\ifbool{shortversion}{}{\footnote{We include a short description of AVATAR in Appendix~\ref{sec:appendix-avatar}.}} 
We modified the AVATAR framework to only allow splitting over ground computable clauses that do not contain answer literals.
Further, if we derive a clause $C[\overline{\sigma}]\lor\ans(r[\overline{\sigma}])$ with AVATAR assertions $C_1[\overline{\sigma}],\dots,C_m[\overline{\sigma}]$, where $C[\overline{\sigma}]$ is ground and computable, we replace it by the clause $C[\overline{\sigma}]\lor\bigvee_{i=1}^m \lnot C_i[\overline{\sigma}]\lor\ans(r[\overline{\sigma}])$ without any assertions.
We then immediately record a program with conditions $\prog{r[\overline{x}]}{\lnot C[\overline{x}]\land\bigwedge_{i=1}^m C_i[\overline{x}]}$, and replace the clause by $C[\overline{\sigma}]\lor\bigvee_{i=1}^m \lnot C_i[\overline{\sigma}]$ (see lines 7-10 of Algorithm~\ref{alg:saturation-new}), which may be then further split by AVATAR.

Finally, our implementation  simplifies  the programs we synthesize. 
If during Algorithm~\ref{alg:saturation-new} we record a program $\langle z, F\rangle$ where $z$ is a variable, 
we do not use this program in the final program construction (line~12 of Algorithm~\ref{alg:saturation-new})  even if $F$ occurs in the derivation of $\square$ (see Example~\ref{ex:group-ite}).

\paragraph{Examples and experimental setup.}
The goal of our experimental evaluation is to showcase the benefits of our approach on problems that are deemed to be hard, even unsolvable,  by state-of-the-art synthesis techniques. We therefore focused on first--order theory reasoning and evaluated our work on the group theory problems of Examples~\ref{ex:inv}-\ref{ex:inv2}, as well as on integer arithmetic problems. 

As the SMT-LIB2  format can  easily be translated into the SyGuS~2.1 syntax~\cite{sygus-standard}, we compared our results to 
\cvc5~1.0.4~\cite{cvc5}, supporting 
SyGuS-based synthesis ~\cite{sygus-comp}.
Our experiments were run on an AMD Epyc 7502, 2.5 GHz CPU with 1 TB RAM, using a 5 minutes time limit per example. 
Our benchmarks as well as the configurations for our experiments are available at:\linebreak \url{https://github.com/vprover/vampire_benchmarks/tree/master/synthesis}

\paragraph{Experimental results with group theory properties.}
\vampire synthesizes the solutions of the Examples~\ref{ex:inv}-\ref{ex:inv2} in 0.01, 13, and 0.03 seconds, respectively.
Since these examples use uninterpreted functions, 
they cannot be encoded in the SyGuS 2.1 syntax, showcasing the limits of other synthesis tools.

\paragraph{Experimental results with maximum of $n\geq 2$ integers.}
For the maximum of 2 integers, the specification is $\forall x_1, x_2\in\intg.\ \exists y\in\intg.\big(y\geq x_1 \land y\geq x_2 \land (y=x_1 \lor y=x_2)\big)$, and the program we synthesize is $\ite{x_1<x_2}{x_2}{x_1}$.
Both our work and \cvc5 are able to synthesize  programs choosing the maximal value for up to $n=23$ input variables, as summarized below. For $n> 23$, both \vampire{} and \cvc5 time out.
\begin{center}\setlength{\tabcolsep}{8pt}
\begin{tabular}{r|ccccccc}
      Number $n$ of variables for  & \multirow{2}{*}{2}     & \multirow{2}{*}{5}     & \multirow{2}{*}{10}    & \multirow{2}{*}{15}    & \multirow{2}{*}{20}    & \multirow{2}{*}{22}    & \multirow{2}{*}{23} \\ 
      which max is synthesized & & & & & & &  \\ \hline
     \vampire{}             & 0.03  & 0.03  & 0.05  & 1     & 13    & 55    & 215\\
     \cvc5                  & 0.01  & 0.03  & 0.6   & 6.8   & 88    & 188   & 257\\  
\end{tabular}
\end{center}

\paragraph{Experimental results with polynomial equations.}
\vampire can synthesize the solution of polynomial equations; for example, for  $\forall x_1, x_2 \in \intg.\exists y \in\intg. (y^2 = x_1^2 + 2x_1x_2 + x_2^2)$, we synthesize $x_1+x_2$.
\vampire{} finds the corresponding program in 26 seconds using simple first-order reasoning, while
\cvc5 fails in our setup. 

\medskip


\section{Related Work}\label{sec:related}
Our work builds upon deductive synthesis~\cite{MannaWaldinger1980} adapted for the resolution calculus~\cite{LWC1974,Tammet1994}. We extend this line of work with saturation-based program synthesis, by using adjustments of  the superposition calculus.

Component-based synthesis of recursion-free programs~\cite{StickelEtAl1994} from logical specifications is addressed in~\cite{StickelEtAl1994,GulwaniEtAl2011,TiwariEtAl2015}. 
The work of~\cite{StickelEtAl1994} uses first-order theorem proving  to prove  specifications and  extract  programs from  proofs. In~\cite{GulwaniEtAl2011,TiwariEtAl2015}, $\exists\forall$ formulas are produced to capture specifications over component properties and  SMT solving is applied to find a term satisfying the formula, corresponding to a straight-line program. We complement~\cite{StickelEtAl1994}  with saturation-based superposition proving and avoid template-based SMT solving from~\cite{GulwaniEtAl2011,TiwariEtAl2015}.

A prominent line of research comes with syntax guided synthesis (SyGuS)~\cite{DBLP:series/natosec/AlurBDF0JKMMRSSSSTU15}, where functional specifications are given using  a context-free grammar. This grammar yields program templates to be synthesized via an enumerative search procedure based on SMT solving~\cite{cvc5,ogis}. We believe our work is  complementary to SyGuS, by strengthening first-order reasoning for program synthesis, as evidenced by Examples~\ref{ex:inv}--\ref{ex:inv2}.

The sketching technique~\cite{sketch,rosette} 
synthesizes program assignments to variables, using   
an alternative framework to the program synthesis setting we rely upon. In  particular, sketching  addresses  domains that do not involve input logical formulas as functional specifications,  such as example-guided synthesis~\cite{DBLP:conf/pldi/ThakkarNSANR21}.

\section{Conclusions}\label{sec:conclusions}
We extend  saturation-based proof search to saturation-based program synthesis, aiming to derive recursion-free programs from specifications.
We integrate answer literals with saturation, and modify the superposition calculus and unification to synthesize computable programs.
Our initial experiments show that a first-order theorem prover becomes an efficient program synthesizer, potentially opening up interesting avenues toward recursive program synthesis, for example using saturation-based proving with induction.

\medskip

\paragraph{Acknowledgements.} 
We thank Haniel Barbosa for support with experiments with \cvc5.
This work was partially funded by the ERC CoG ARTIST 101002685, and the FWF grants LogiCS W1255-N23 and LOCOTES P 35787.

\bibliographystyle{splncs04}
\bibliography{bibliography}

\newpage
\appendix
\section{Proofs}\label{sec:appendix-proofs}

In the following we use the notion of \emph{universal closure} of a formula $F$, which is the formula $\forall \overline{z}. F$, where $\overline{z}$ are all free variables of $F$.

\setcounter{theorem}{0}
\begin{Theorem}
\theoremone
\end{Theorem}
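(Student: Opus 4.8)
\textbf{Proof plan for Theorem~\ref{thm:1}.}

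The plan is to reduce the claim to the soundness of the inference system $\mathcal{I}$ by treating the answer predicate $\ans$ as an ordinary fresh predicate symbol and then interpreting it appropriately. First I would observe that the derivation of $C\lor\ans(r[\overline{\sigma}])$ from the clause set $\Gamma := \{A_1,\dots,A_n\}\cup\cnf(\lnot F[\overline{\sigma},y]\lor\ans(y))\cup\{C_1,\dots,C_m\}$ uses only rules of $\mathcal{I}$, which is sound; hence $\Gamma \vdash C\lor\ans(r[\overline{\sigma}])$, i.e.\ every model of $\Gamma$ is a model of $C\lor\ans(r[\overline{\sigma}])$. The key trick is that $\ans$ does not occur in $A_1,\dots,A_n,C_1,\dots,C_m$ (these are answer-literal-free by assumption) nor in $F$, so we are free to choose its interpretation.

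The main step is then a model-theoretic argument by contradiction. Suppose $A_1,\dots,A_n,C_1,\dots,C_m \not\vdash C, F[\overline{\sigma},r[\overline{\sigma}]]$. Then there is an interpretation $M$ (a model of the underlying theory $T$, with the free variables of the $C_i$ and $C$ treated via their universal closures as in the appendix convention) such that $M \models A_i$ for all $i$, $M\models C_j$ for all $j$, $M\models \lnot C$, and $M\models \lnot F[\overline{\sigma},r[\overline{\sigma}]]$. I would extend $M$ to an interpretation $M'$ of the language with $\ans$ by defining $\ans^{M'} := \{\, d : M \models \lnot F[\overline{\sigma}, d]\,\}$, i.e.\ $M'\models \ans(t)$ iff $M'\models \lnot F[\overline{\sigma},t]$ for every ground term $t$ (more precisely, for every element; since the clausified formula is universally quantified over $y$ this is exactly what is needed). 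With this choice, $M'\models \forall y.(\lnot F[\overline{\sigma},y]\lor\ans(y))$ trivially, and $M'\models \cnf(\lnot F[\overline{\sigma},y]\lor\ans(y))$ since clausification preserves satisfiability and the Skolem symbols introduced there can be interpreted freely (they occur nowhere else). Also $M'\models A_i$ and $M'\models C_j$ since $\ans$ does not occur in them. Hence $M'\models\Gamma$, so by soundness $M'\models C\lor\ans(r[\overline{\sigma}])$. But $M'\models\lnot C$, so $M'\models\ans(r[\overline{\sigma}])$, which by construction of $\ans^{M'}$ gives $M'\models\lnot F[\overline{\sigma},r[\overline{\sigma}]]$ — consistent with our assumption, so I need to be slightly more careful: the contradiction must come from the opposite direction. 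Instead I define $\ans^{M'} := \{\, d : M\not\models F[\overline{\sigma},d]\,\}$ is the same set; the real point is that we want $M'\models\ans(r[\overline{\sigma}])$ to force $M\models F[\overline{\sigma},r[\overline{\sigma}]]$, so the correct definition is $\ans^{M'} := \{\, d : M\models F[\overline{\sigma},d]\,\}$. Then $M'\models\lnot F[\overline{\sigma},y]\lor\ans(y)$ still holds for all $y$ (if $F$ holds then $\ans$ holds; if not, $\lnot F$ holds), and the same argument yields $M'\models\ans(r[\overline{\sigma}])$, hence $M\models F[\overline{\sigma},r[\overline{\sigma}]]$, contradicting $M\models\lnot F[\overline{\sigma},r[\overline{\sigma}]]$. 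This establishes $A_1,\dots,A_n,C_1,\dots,C_m\vdash C,F[\overline{\sigma},r[\overline{\sigma}]]$.

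The final sentence of the theorem (that under $C_1,\dots,C_m,\lnot C$ the term $r[\overline{\sigma}]$ is a definite answer to~\eqref{eq:spec2}) is then immediate: the entailment just proved is exactly $A_1\land\dots\land A_n\land C_1\land\dots\land C_m\land\lnot C\rightarrow F[\overline{\sigma},r[\overline{\sigma}]]$, and computability of $r[\overline{\sigma}]$ is the hypothesis carried along from the construction. I expect the main obstacle to be the bookkeeping around clausification and Skolemization: one must argue that interpreting $\ans$ freely on the \emph{clausified} formula is legitimate, i.e.\ that $\cnf(\lnot F[\overline{\sigma},y]\lor\ans(y))$ is satisfiable in any $M'$ extending $M$ by the above $\ans^{M'}$ together with a suitable interpretation of the fresh Skolem functions — this is standard (clausification is satisfiability-preserving and the new Skolem symbols are fresh) but needs to be stated cleanly. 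A secondary subtlety is the treatment of free variables in the $C_i$ and $C$: since the paper's convention (stated at the top of the appendix) reads clauses as their universal closures, the entailment $\vdash$ and the model $M$ must respect that, which the argument above does by quantifying over all assignments.
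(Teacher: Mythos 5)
Your proposal is correct, but it takes a genuinely different route from the paper's proof. The paper argues syntactically on the derivation itself: since $\ans$ is uninterpreted, it replaces every occurrence of $\ans(y)$ by $y\not\simeq r[\overline{\sigma}]$, obtaining --- after lifting the ordering and selection constraints, which do not affect soundness --- a derivation of $C\lor r[\overline{\sigma}]\not\simeq r[\overline{\sigma}]$ from $A_1,\dots,A_n$, $\cnf(\lnot F[\overline{\sigma},y]\lor y\not\simeq r[\overline{\sigma}])$ and $C_1,\dots,C_m$; it then fixes any model of the antecedent of the claimed implication and splits on whether this modified clausified premise is true (soundness yields $C$) or false (its negation yields $F[\overline{\sigma},r[\overline{\sigma}]]$). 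You leave the derivation untouched and work purely model-theoretically: in a putative countermodel $M$ you interpret the fresh predicate $\ans$ as the set of witnesses of $F[\overline{\sigma},\cdot]$, expand $M$ so that the clausified premise holds, and let soundness force $\ans(r[\overline{\sigma}])$ and hence $F[\overline{\sigma},r[\overline{\sigma}]]$, a contradiction. Both arguments rest on the same two pillars --- soundness of $\mathcal{I}$ and the freedom to specialize the fresh symbol $\ans$ (the paper specializes it syntactically to ``$y\not\simeq r[\overline{\sigma}]$'', you semantically to the witness set) --- but your version never has to argue that a transformed derivation is still a derivation, which is precisely why the paper must discard the ordering/selection side conditions; in exchange you must invoke explicitly the model-expansion property of Skolemization/clausification, a point the paper glosses over by treating $\cnf$ as an equivalence. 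Two small repairs: clean up the false start in the definition of $\ans^{M'}$ (only the final choice $\{d : M\models F[\overline{\sigma},d]\}$ is correct, and the intermediate ``is the same set'' remark is not), and state, as you implicitly use (and as the paper does too), that $C$, $C_1,\dots,C_m$ contain no answer literal and that the Skolem symbols introduced by clausification are fresh with respect to the rest of the entailment, so that they may be interpreted freely in the expansion.
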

\begin{proof}
We consider the calculus which was used for deriving $C\lor\ans(r[\overline{\sigma}])$, but with lifted ordering and selection constraints.
Since the soundness of the calculus does not depend on these constraints, the calculus without the constraints is sound as well.
Now, since $\ans$ is uninterpreted, we can replace $\ans(y)$ by $y\not\simeq r[\overline{\sigma}]$, and obtain a derivation of $C\lor r[\overline{\sigma}]\not\simeq r[\overline{\sigma}]$ from $A_1,\dots,A_n, \forall y.\cnf(\lnot F[\overline{\sigma}, y]\lor y\not\simeq r[\overline{\sigma}])$ using the calculus without the constraints.\footnote{The derivation might not have been possible in the calculus with the ordering and selection constraints due to replacing the positive literal $\ans(y)$ with the negative literal $y\not\simeq r[\overline{\sigma}]$ containing different symbols.}

We want to show that
\begin{equation}
\bigwedge_{i=1}^n A_i\land\bigwedge_{i=1}^m C_i \rightarrow C\lor F[\overline{\sigma}, r[\overline{\sigma}]]\label{eq:proof1}
\end{equation}
is valid.
Hence, we need to show that in each interpretation, in which the antecedent is true, also the consequent is true.
Let us consider such an interpretation $I$.
We distinguish two cases.
First, assume that $\forall y.\cnf(\lnot F[\overline{\sigma}, y]\lor y\not\simeq r[\overline{\sigma}])$ is true in $I$.
Then since all assumptions from which we derived $C\lor r[\overline{\sigma}]\not\simeq r[\overline{\sigma}]$ are true in $I$ and since the inference system is sound, also $C\lor r[\overline{\sigma}]\not\simeq r[\overline{\sigma}]$ is true.
That clause is equivalent to $C$, hence $C$ is true, which makes the consequent of~\eqref{eq:proof1} true.
Second, assume that $\forall y.\cnf(\lnot F[\overline{\sigma}, y]\lor y\not\simeq r[\overline{\sigma}])$ is false in $I$.
Then its negation, $\lnot \forall y.\cnf(\lnot F[\overline{\sigma}, y]\lor y\not\simeq r[\overline{\sigma}])$, equivalent to $\exists y. (F[\overline{\sigma}, y]\land y\simeq r[\overline{\sigma}])$, equivalent to $F[\overline{\sigma}, r[\overline{\sigma}]]$ must be true in $I$.
Hence, the consequent of~\eqref{eq:proof1} is true also in this case.
Therefore~\eqref{eq:proof1} is valid. \hfill $\square$
\end{proof}

\begin{Corollary}
\corollarytwo
\end{Corollary}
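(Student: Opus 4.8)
The plan is to reduce Corollary~\ref{crl:1} to Theorem~\ref{thm:1}, which already did the semantic work. The statement of Corollary~\ref{crl:1} asserts that, under the hypotheses on how $C[\overline{\sigma}]\lor\ans(r[\overline{\sigma}])$ was derived, the object $\prog{r[\overline{x}]}{\bigwedge_{j=1}^{m}C_j[\overline{x}]\land\lnot C[\overline{x}]}$ is a \emph{program with conditions} for~\eqref{eq:spec2}. Unwinding the definition of ``program with conditions'' from Section~\ref{sec:preliminaries}, this means three things must hold: (i) $r[\overline{x}]$ is computable, (ii) the formulas $C_1[\overline{x}],\dots,C_m[\overline{x}],\lnot C[\overline{x}]$ are computable, and (iii) $\forall \overline{x}.\big(C_1[\overline{x}]\land\ldots\land C_m[\overline{x}]\land\lnot C[\overline{x}]\rightarrow F[\overline{x},r[\overline{x}]]\big)$ holds — where, since we are in the more general setting~\eqref{eq:spec2}, ``holds'' should be read relative to the initial assumptions $A_1,\dots,A_n$, i.e.\ $\bigwedge_{i=1}^n A_i$ entails the implication. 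Points (i) and (ii) are immediate from the hypotheses: $r[\overline{\sigma}]$ and $C[\overline{\sigma}]$ are assumed computable, each $C_j[\overline{\sigma}]$ is assumed computable, and computability is preserved when replacing the Skolem constants $\overline{\sigma}$ by the fresh input variables $\overline{x}$, since $\overline{\sigma}$ are (computable) constants. Negation does not affect computability of a literal/clause either.

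The substance is point (iii). First I would invoke Theorem~\ref{thm:1} directly with the additional assumptions instantiated as $C_1[\overline{\sigma}],\dots,C_m[\overline{\sigma}]$: the theorem gives
\[
A_1,\dots,A_n,C_1[\overline{\sigma}],\dots,C_m[\overline{\sigma}]\vdash C[\overline{\sigma}],\ F[\overline{\sigma},r[\overline{\sigma}]],
\]
i.e.\ the formula $\bigwedge_{i=1}^n A_i\land\bigwedge_{j=1}^m C_j[\overline{\sigma}]\rightarrow C[\overline{\sigma}]\lor F[\overline{\sigma},r[\overline{\sigma}]]$ is valid. Rearranging the disjunction, this is equivalent to $\bigwedge_{i=1}^n A_i\land\bigwedge_{j=1}^m C_j[\overline{\sigma}]\land\lnot C[\overline{\sigma}]\rightarrow F[\overline{\sigma},r[\overline{\sigma}]]$ being valid. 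Now I would use the standard Skolem-constant argument already used twice in the excerpt (once in Section~\ref{sec:preliminaries} to define ``program'', once implicitly in Theorem~\ref{thm:1}): the constants $\overline{\sigma}$ are fresh and occur in this formula only through the displayed terms, so its validity implies the validity of its universal closure with $\overline{\sigma}$ replaced by fresh universally quantified variables $\overline{x}$, namely
\[
\forall\overline{x}.\Big(\bigwedge_{i=1}^n A_i\land\bigwedge_{j=1}^m C_j[\overline{x}]\land\lnot C[\overline{x}]\rightarrow F[\overline{x},r[\overline{x}]]\Big).
\]
Since the $A_i$ are closed, this is exactly $\bigwedge_{i=1}^n A_i\rightarrow\forall\overline{x}.\big(\bigwedge_{j=1}^m C_j[\overline{x}]\land\lnot C[\overline{x}]\rightarrow F[\overline{x},r[\overline{x}]]\big)$, which is precisely the defining condition for $\prog{r[\overline{x}]}{\bigwedge_{j=1}^{m}C_j[\overline{x}]\land\lnot C[\overline{x}]}$ to be a program with conditions for~\eqref{eq:spec2}.

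I expect the only delicate point to be the Skolem-constant lifting step — making precise that genericity of $\overline{\sigma}$ lets us universally generalize over $\overline{x}$ inside the entailment, and that the presence of the (closed) assumptions $A_1,\dots,A_n$ and of the other ground computable conditions $C_j[\overline{\sigma}]$, which are themselves built over $\overline{\sigma}$, does not break this: every occurrence of $\overline{\sigma}$ in the formula is inside an expression indexed by $\overline{\sigma}$ that gets renamed uniformly. Once that is granted, the corollary is a one-line consequence of Theorem~\ref{thm:1} plus the trivial observation that replacing constants by variables preserves computability of terms and literals. I would therefore keep the proof short: state (i)--(iii), dispatch (i) and (ii) in a sentence, and for (iii) cite Theorem~\ref{thm:1}, do the disjunction rearrangement, and apply the generalization lemma on Skolem constants.
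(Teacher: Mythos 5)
Your proposal is correct and follows essentially the same route as the paper's own proof: apply Theorem~\ref{thm:1} with the additional assumptions $C_1[\overline{\sigma}],\dots,C_m[\overline{\sigma}]$, use freshness of the Skolem constants $\overline{\sigma}$ to generalize to the input variables $\overline{x}$, and rearrange the resulting implication into the defining condition of a program with conditions for~\eqref{eq:spec2}. Your additional explicit checks of computability under the replacement $\overline{\sigma}\mapsto\overline{x}$ are harmless elaborations of what the paper leaves implicit.
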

\begin{proof}
From Theorem~\ref{thm:1} follows that $\bigwedge_{i=1}^n A_i\land\bigwedge_{i=1}^m C_i[\overline{\sigma}] \rightarrow C[\overline{\sigma}]\lor F[\overline{\sigma}, r[\overline{\sigma}]]$ holds.
Since $\overline{\sigma}$ are fresh uninterpreted constants, we obtain that $\bigwedge_{i=1}^n A_i\land\bigwedge_{i=1}^m C_i[\overline{x}] \rightarrow C[\overline{x}]\lor F[\overline{x}, r[\overline{x}]]$ is valid as well, and that is equivalent to
$\bigwedge_{j=1}^{m}C_j[\overline{x}]\land\lnot C[\overline{x}] \rightarrow (\bigwedge_{i=1}^n A_i \rightarrow F[\overline{x}, r[\overline{x}]])$.
Therefore $\prog{r[\overline{x}]}{\bigwedge_{j=1}^{m}C_j[\overline{x}]\land\lnot C[\overline{x}]}$ is a program with conditions for $A_1\land\ldots\land A_n \rightarrow \forall \overline{x}.\exists y. F[\overline{x}, y]$.\hfill$\square$
\end{proof}

\setcounter{equation}{7}
\begin{Corollary}
\corollarythree
\end{Corollary}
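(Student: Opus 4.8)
The plan is to prove that the assembled program~\eqref{eq:prog-construction} is a program for~\eqref{eq:spec2}, i.e.\ that $\bigwedge_{i=1}^n A_i \rightarrow \forall\overline{x}. F[\overline{x}, P[\overline{x}]]$ is valid. Since the $A_i$ are closed and $\overline{x}$ is universally quantified, it suffices to fix the input $\overline{x}$, assume $\bigwedge_{i=1}^n A_i$, and show $F[\overline{x}, P[\overline{x}]]$. The central observation is a case split driven by the conditions $C_1[\overline{x}], \dots, C_k[\overline{x}]$: because $\bigwedge_{i=1}^n A_i \land \bigwedge_{i=1}^k C_i[\overline{x}]$ is unsatisfiable and $\bigwedge_{i=1}^n A_i$ holds, at least one $C_i[\overline{x}]$ must be false. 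Let $\ell$ be the least index with $\lnot C_\ell[\overline{x}]$; then $C_1[\overline{x}], \dots, C_{\ell-1}[\overline{x}]$ all hold.

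Next I would trace the evaluation of the $\itecons$-cascade~\eqref{eq:prog-construction} under this case split. Since $\lnot C_\ell[\overline{x}]$ holds, no earlier branch guard $\lnot C_j[\overline{x}]$ for $j < \ell$ is satisfied (each such $C_j[\overline{x}]$ is true by minimality of $\ell$), so the cascade skips branches $1, \dots, \ell-1$; at branch $\ell$ the guard $\lnot C_\ell[\overline{x}]$ is true (if $\ell < k$) so $P[\overline{x}]$ evaluates to $r_\ell[\overline{x}]$, and if $\ell = k$ the cascade falls through to the final $\elsecmd$ branch $r_k[\overline{x}]$ — in both subcases $P[\overline{x}] = r_\ell[\overline{x}]$. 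Then I invoke the hypothesis that $P_\ell[\overline{x}] = \prog{r_\ell[\overline{x}]}{\bigwedge_{j=1}^{\ell-1}C_j[\overline{x}]\land\lnot C_\ell[\overline{x}]}$ is a program with conditions for~\eqref{eq:spec2}, which by definition means $\forall\overline{x}.(\bigwedge_{j=1}^{\ell-1}C_j[\overline{x}]\land\lnot C_\ell[\overline{x}] \rightarrow (\bigwedge_{i=1}^n A_i \rightarrow F[\overline{x}, r_\ell[\overline{x}]]))$ holds. Since we have established $C_1[\overline{x}], \dots, C_{\ell-1}[\overline{x}]$, $\lnot C_\ell[\overline{x}]$, and $\bigwedge_i A_i$, we conclude $F[\overline{x}, r_\ell[\overline{x}]] = F[\overline{x}, P[\overline{x}]]$, as desired.

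The only mild subtlety — and the step I expect to require the most care in the write-up — is handling the semantics of the nested $\itecons$ term correctly: one must argue that the minimal-index choice of $\ell$ makes exactly one branch fire and that the substitution of $P[\overline{x}]$ for $y$ commutes with the case analysis (this uses that $\itecons$ is interpreted by the stated "if valid then $s$ else $t$" semantics, and that the $C_i[\overline{x}]$ are the branch conditions on the program-with-conditions side but appear negated as guards in~\eqref{eq:prog-construction}). The boundary case $\ell = k$, where the trailing $\elsecmd\ r_k[\overline{x}]$ has no explicit guard, should be noted explicitly since the last program-with-conditions still carries the condition $\lnot C_k[\overline{x}]$, which is exactly what holds in that case. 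Everything else is routine propositional bookkeeping over the chain of conditions.
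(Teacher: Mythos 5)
Your proposal is correct and follows essentially the same route as the paper's proof: pick the least index $\ell$ with $\lnot C_\ell[\overline{x}]$ (which exists under $\bigwedge_i A_i$ by the unsatisfiability hypothesis), observe that the $\itecons$-cascade then evaluates to $r_\ell[\overline{x}]$, and apply the definition of a program with conditions for $P_\ell$ to obtain $F[\overline{x},P[\overline{x}]]$. The paper merely phrases the same argument explicitly in terms of an arbitrary interpretation $I$ and variable assignment $v$ before closing over $\overline{x}$.
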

\setcounter{equation}{13}
\begin{proof}
For any interpretation $I$ and any variable assignment $v$, let $p$ be the smallest index such that $\lnot C_p[\overline{x}]$ holds in $I$ under $v$, but all $\lnot C_j[\overline{x}]$, where $1\leq j<p$, do not hold in $I$ under $v$.
Since $\bigwedge_{i=1}^n A_i\land\bigwedge_{i=1}^k C_i[\overline{x}]$ is unsatisfiable, under the assumptions $A_1,\dots,A_n$ such a $p$ has to exist.
Then in $I$ under $v$ and under the assumptions $A_1,\dots,A_n$, the interpretation of $P[\overline{x}]$ is the same as the interpretation of $r_p[\overline{x}]$.

Further, since $\bigwedge_{j=1}^{p-1}C_j[\overline{x}]\land\lnot C_p[\overline{x}]$ is the condition for $P_p[\overline{x}]$, from the definition of a program with conditions we obtain that $A_1\land\dots\land A_n\rightarrow F[\overline{x}, r_p[\overline{x}]]$ holds in $I$ under $v$.
Hence also $A_1\land\dots\land A_n\rightarrow F[\overline{x}, P[\overline{x}]]$ holds in $I$ under $v$.

Finally, since this argument holds for any $I$ and $v$, and since all $A_1,\dots,A_n$ are closed formulas, also $A_1\land\dots\land A_n\rightarrow \forall \overline{x}. F[\overline{x}, P[\overline{x}]]$ holds.
Therefore $P[\overline{x}]$ is a program for~\eqref{eq:spec2}.\hfill $\square$
\end{proof}

\begin{Lemma}
\lemmathree
\end{Lemma}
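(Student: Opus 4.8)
\textbf{Proof plan for Lemma~\ref{lemma:soundness1} (Soundness of Inferences with Answer Literals).}

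The plan is to reduce soundness of each new rule in~\eqref{eq:new-rules} to soundness of the original rule~\eqref{eq:orig-rule}, exploiting the fact that the answer predicate $\ans$ is uninterpreted. First I would recall that soundness of an inference rule means: whenever the premises are valid (in the underlying theory, with free variables universally closed), the conclusion is valid. Fix one of the rules in~\eqref{eq:new-rules}, say the $i$-th one, whose premise set is $C_1,\dots,C_i\lor\ans(r),\dots,C_n$ and whose conclusion is $(D\lor C\lor\ans(r))\theta'$, where $(\theta',D)$ is a computable unifier of $E,E'$ with respect to $r$, i.e. $(D\lor E\simeq E')\theta'$ is valid in the theory. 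I would then take an arbitrary interpretation $I$ making the universal closures of all premises true, and show the universal closure of the conclusion is true in $I$.

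The key step is to pick an arbitrary variable assignment $v$ for the free variables of the conclusion and split on whether $D\theta'$ holds under $v$. If $D\theta'$ is satisfied under $v$, the conclusion clause is trivially true under $v$ because it contains the disjunct $D\theta'$. If $D\theta'$ is not satisfied, then since $(D\lor E\simeq E')\theta'$ is valid we get $E\theta' = E'\theta'$ under $v$, so $\theta'$ acts as a genuine unifier of $E,E'$ at this point; hence the original rule~\eqref{eq:orig-rule} is applicable with substitution $\theta'$ in place of $\theta$. To invoke soundness of~\eqref{eq:orig-rule}, I replace the atom $\ans(r)$ uniformly by the (fixed, interpreted-as-we-like) formula $y\not\simeq r$-trick is not even needed here: more simply, since $\ans$ is an uninterpreted predicate not occurring in the theory, I may instantiate $I$ to interpret $\ans$ as the single-point predicate that is true exactly on the value of $r\theta'$ under $v$ — or, cleaner, just treat $\ans(r)$ as an ordinary literal and note that the original rule's soundness is a syntactic/semantic fact independent of which predicate symbols occur. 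Concretely: from the premises $C_1,\dots,C_i\lor\ans(r),\dots,C_n$ true under $v$, apply the sound rule~\eqref{eq:orig-rule} (treating $\ans(r)$ as just another literal carried along in $C_i$) to conclude $(C\lor\ans(r))\theta'$ is true under $v$; therefore $(D\lor C\lor\ans(r))\theta'$ is true under $v$ as well. Since $v$ was arbitrary, the universal closure of the conclusion holds in $I$, and since $I$ was arbitrary, the rule is sound.

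The main obstacle — really the only subtle point — is making precise that "the original rule with $\theta$ replaced by $\theta'$ is still a sound inference," i.e. that soundness of~\eqref{eq:orig-rule} does not depend on $\theta$ being the mgu but only on $\theta$ being \emph{some} substitution under which $E\theta = E'\theta$, and likewise does not depend on the ordering/selection side conditions. This is the same observation already used in the proof of Theorem~\ref{thm:1} (replacing the calculus by one with lifted ordering and selection constraints, which stays sound), so I would cite that reasoning: soundness is preserved when we drop side conditions and when we specialize the unifier, because any ground instance of the new conclusion is a ground instance obtainable from ground instances of the premises via the original rule. One should also note explicitly that the carried-along answer literal $\ans(r)$ does not interfere, since $\ans$ is uninterpreted and the rule~\eqref{eq:orig-rule} operates on the remaining literals $C_1,\dots,C_n$ exactly as before; the disequality disjunct $D$ is handled entirely by the case split above and adds nothing to the semantic argument beyond the defining property of a computable unifier. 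Once these points are in place the proof is complete. \hfill$\square$
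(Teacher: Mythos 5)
Your overall strategy matches the paper's up to the crucial step: like the paper, you fix an interpretation and an assignment, split on whether $D\theta'$ holds, use the abstract-unifier property to get $E\theta'\simeq E'\theta'$ in the remaining case, and then invoke soundness of~\eqref{eq:orig-rule} with $\theta:=\theta'$ (the paper makes exactly this "set $\theta:=\theta'$" move, so your remarks about mgu-independence and dropped side conditions are in line with it). The divergence — and the genuine gap — is how the answer literal is handled. You dispose of it by "treating $\ans(r)$ as just another literal carried along in $C_1$" and re-applying the original rule to the premises $C_1\lor\ans(r),C_2,\dots,C_n$ to conclude $(C\lor\ans(r))\theta'$. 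But the lemma's hypothesis is only that the rule with premises exactly $C_1,\dots,C_n$ and conclusion $C\theta$ is sound, i.e.\ a statement about the universal closures of those particular clauses. Once $C_1$ is weakened to $C_1\lor\ans(r)$, its universal closure can be true while $\forall(C_1)$ fails (the two disjuncts may cover the domain jointly without either holding universally), so the assumed soundness gives you nothing to apply; appending the same literal to a premise and to the conclusion does not in general preserve soundness of an abstract rule. For instance, a rule "from $P(x)$ infer $Q$", sound because the background theory contains $(\forall x. P(x))\rightarrow Q$, fits the format~\eqref{eq:orig-rule} with the identity substitution, yet "from $P(x)\lor\ans(x)$ infer $Q\lor\ans(x)$" is falsified by a model in which $P$ and $\ans$ each hold on only part of the domain and $Q$ is false. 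So the carried-along claim is, up to the $D$ bookkeeping, essentially the statement to be proved; your two sketched justifications do not close it — the ground-instance/lifting argument is a property of the concrete superposition-style rules (it is what Lemma~\ref{lemma:soundness2} verifies for them rule by rule), not of an arbitrary sound rule, and the "interpret $\ans$ as a single-point predicate" remark is left undeveloped.

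The paper's proof takes a different route precisely here: it argues by contradiction. Assuming $I$ models the premises of the new rule but not its conclusion, some assignment falsifies $D\theta'$, $C\theta'$ and $\ans(r)\theta'$ simultaneously; the first gives $E\theta'\simeq E'\theta'$, so by (the contrapositive of) soundness of~\eqref{eq:orig-rule} with $\theta:=\theta'$, some $C_i$ must be false in $I$; since $C_2,\dots,C_n$ are premises of the new rule, only $C_1$ can be false, and together with the falsity of $\ans(r)$ this contradicts the assumed truth of the premise $C_1\lor\ans(r)$. To repair your write-up, replace the carried-along step by this contradiction argument (or state and use explicitly the instance-wise soundness property of the particular rules you have in mind, which is then the content of Lemma~\ref{lemma:soundness2} rather than of this abstract lemma).
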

\begin{proof}
For clarity we repeat the original rule:
\begin{equation*}
    \infer[,]{C\theta}{C_1 \quad \cdots \quad C_n}\tag{\ref{eq:orig-rule}}
\end{equation*}
where $\theta$ is a substitution such that $E\theta\simeq E^\prime\theta$ holds for some expressions $E, E^\prime$.

We will prove the soundness of the new rule
\begin{equation*}
\infer[]{(D\lor C\lor \ans(r))\theta'}{C_1 \lor \ans(r) \quad C_2 \quad \cdots \quad C_n},\tag{\ref{eq:new-rules}'}
\end{equation*}
where $(\theta', D)$ is a computable unifier of $E, E^\prime$ with respect to $r$, and none of $C_1,\dots,C_n$ contains an answer literal.
The proof of soundness of the other new rules of~\eqref{eq:new-rules} is analogous.

Assume interpretation $I$ to be a model of the universal closures of the premises of~(\ref{eq:new-rules}'), but not a model of the universal closure of its conclusion.
Then $D\theta', C\theta'$ and $\ans(r)\theta'$ are false in $I$.
From $D\theta'$ being false in $I$ and from $(\theta', D)$ being an abstract unifier follows that $E\theta'\simeq E^\prime\theta'$ holds.
We can therefore set $\theta:=\theta'$.
From the soundness of~\eqref{eq:orig-rule} and $C\theta'$ being false in $I$ then follows that some of $C_1,\dots,C_n$ is false in $I$.
However, none of $C_2,\dots,C_n$ can be false in $I$, because we assumed all premises of~(\ref{eq:new-rules}') to be true in $I$.
Hence, $C_1$ is false in $I$.
Further, from $\ans(r)\theta'$ being false in $I$ follows that $\ans(r)$ is false in $I$.
However, that means that $C_1\lor\ans(r)$ is false in $I$, which contradicts the assumption that the universal closures of all premises of rule~(\ref{eq:new-rules}') are true in $I$.

Hence, the rule~(\ref{eq:new-rules}') is sound.\hfill $\square$
\end{proof}

\begin{Lemma}
\lemmafour
\end{Lemma}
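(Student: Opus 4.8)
The plan is to reduce the lemma to Lemma~\ref{lemma:soundness1} wherever possible, and to handle the genuinely new rules -- those combining two answer literals -- by a direct semantic argument that mirrors the proof of Lemma~\ref{lemma:soundness1}. As in the proof of Theorem~\ref{thm:1}, I would first observe that soundness is insensitive to the ordering ($\succ$) and selection side conditions, so these may be dropped throughout, and that a computable unifier $(\theta,D)$ is in particular an abstract unifier, so $(D\lor E\simeq E')\theta$ is valid in the underlying theory.

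First, for the single-premise rules of Figure~\ref{fig:sup-new} -- factoring, equality resolution, and equality factoring with an answer literal -- I note that each is literally an instance of the rule scheme~\eqref{eq:new-rules} with $n=1$: e.g. for factoring take $C_1 := \underline{A}\lor\underline{A'}\lor C$, $E := A$, $E' := A'$, so that the original factoring rule of \Sup{} has conclusion $(A\lor C)\theta$ and the answer-literal version is exactly~\eqref{eq:new-rules}; similarly for equality resolution ($E:=s$, $E':=t$) and equality factoring ($E:=s$, $E':=s'$). Since the corresponding rules of \Sup{} are sound, Lemma~\ref{lemma:soundness1} yields soundness of their answer-literal counterparts at once. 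The same argument covers the rules mentioned in the text as ``constructed as~\eqref{eq:new-rules}'' from the binary resolution and superposition rules, i.e. the case where only one premise carries an answer literal.

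It then remains to treat the two-premise rules in which \emph{both} premises carry an answer literal, i.e. the binary-resolution and superposition rules displayed in Figure~\ref{fig:sup-new}, which come in two shapes: the $\itecons$-combining rules (left column) and the rules that keep one answer literal and add a fresh constraint $r\not\simeq r'$ (right column). For each, I assume towards a contradiction an interpretation $I$ satisfying the universal closures of both premises but falsifying the conclusion under some variable assignment $v$. Under $v$ the constraint $D\theta$ is false, so $E\theta = E'\theta$ holds in $I$ under $v$ (with $E,E'$ being $s,s'$ for superposition and $A,A'$ for resolution). For the right-column rules the assumption also makes $(r\not\simeq r')\theta$ false, hence $\ans(r)\theta$ and $\ans(r')\theta$ have the same truth value -- both false. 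For the left-column rules I case-split on the truth value of the $\itecons$ condition ($A\theta$ for resolution, $(s\simeq t)\theta$ for superposition): if it holds, the answer-literal argument of the conclusion reduces to $r'\theta$, so $\ans(r')\theta$ is false and I invoke the truth of the second premise; if it fails, the argument reduces to $r\theta$, so $\ans(r)\theta$ is false and I invoke the truth of the first premise. In each branch, combining $E\theta = E'\theta$ (for superposition also: when $(s\simeq t)\theta$ holds, $s'\theta = t\theta$ in $I$, hence $L[s']\theta$ and $L[t]\theta$ agree) with the falsity of the remaining conclusion literals forces one premise to be false under $v$, contradicting the choice of $I$. Hence the universal closure of the conclusion holds and the rule is sound.

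The main obstacle I anticipate is essentially bookkeeping: there are several variants (the six superposition rules, two binary resolution rules, and three single-premise rules), and the delicate point common to all the $\itecons$ variants is aligning the reduced answer-literal argument with the correct premise after the case split -- this alignment is precisely what makes the $\itecons$ construction sound. I would therefore state the representative superposition case (e.g. the first left-column and first right-column rules) in full and remark that all remaining cases are verified analogously.
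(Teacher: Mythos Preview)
Your proposal is correct and follows essentially the same approach as the paper: reduce the single-answer-literal rules (factoring, equality factoring, equality resolution, and the one-premise-with-answer-literal variants of superposition and binary resolution) to Lemma~\ref{lemma:soundness1}, and handle the two-answer-literal rules by a direct countermodel argument with a case split on the $\itecons$ condition (left column) or by exploiting the falsity of $(r\not\simeq r')\theta$ (right column). The paper chooses as its representative cases the first left-column superposition rule and the second (right-column) binary resolution rule, whereas you propose to spell out the first left-column and first right-column superposition rules; either choice suffices, and the remaining cases are indeed analogous.
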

\begin{proof}
Soundness of the factoring, equality factoring and equality resolution rules follows from Lemma~\ref{lemma:soundness1}.

We will prove soundness for the first superposition rule and the second binary resolution rule.
The proofs for other superposition and binary resolution rules are analogous.

For clarity we repeat the first superposition rule of Figure~\ref{fig:sup-new}:
\[
\infer[]{(D\lor L[t]\lor C \lor C^\prime\lor\ans(\ite{s\!\simeq\! t}{r^\prime}{r}))\theta}{s\simeq t\lor C\lor \ans(r)\quad L[s^\prime]\lor C^\prime\lor\ans(r^\prime)}
\]
Assume interpretation $I$ to be a model of the universal closures of the premises of the rule, but not a model of the universal closure of its conclusion.
Then there is some variable assignment $v$ such that $(D\lor L[t]\lor C \lor C^\prime\lor\ans(\ite{s\!\simeq\! t}{r^\prime}{r}))\theta$ is false in $I$ under $v$.
Let $v^\prime$ be a variable assignment that assigns to each variable $x$ the value that $x\theta$ has in $I$ under $v$.
Then:
\begin{enumerate}
    \item From $L[t]\theta, C\theta, C^\prime\theta$ being false in $I$ under $v$ follows that $L[t], C, C^\prime$ are false in $I$ under $v^\prime$.
    \item Since $D\theta$ is false in $I$ under $v$, from $(\theta, D)$ being an abstract unifier of $s, s'$ follows that $s\theta\simeq s'\theta$ is true in $I$ under $v$, and therefore $s, s^\prime$ have the same interpretation in $I$ under $v^\prime$.
    Then consider two cases:
    \begin{enumerate}
        \item $s\simeq t$ is true in $I$ under $v^\prime$ and $s\theta\simeq t\theta$ is true in $I$ under $v$.
        Then from $\ans(\ite{s\!\simeq\! t}{r^\prime}{r})\theta$ being false in $I$ under $v$ follows that $\ans(r^\prime)\theta$ is false in $I$ under $v$ and therefore $\ans(r^\prime)$ is false in $I$ under $v^\prime$.
        Also from $s\simeq t$ being true in $I$ under $v^\prime$, 1., and 2. follows that $L[s^\prime]$ is false in $I$ under $v^\prime$.
        Then the whole second premise of the rule is false in $I$ under $v^\prime$, which is a contradiction with the assumption that $I$ is a model of its universal closure.
        \item $s\simeq t$ is false in $I$ under $v^\prime$ and $s\theta\simeq t\theta$ is false in $I$ under $v$.
        This case leads similarly to the first premise being false, in contradiction with the assumption.
    \end{enumerate}
\end{enumerate}
Therefore the first superposition rule is sound.

\smallskip

For clarity we repeat the second binary resolution rule of Figure~\ref{fig:sup-new}:
\[
\infer[]{(D\lor r\!\not\simeq\! r^\prime\lor C \lor C^\prime \lor \ans(r))\theta}
        {A\!\lor\! C\!\lor\! \ans(r)\quad \neg A^\prime\!\lor\! C^\prime\!\lor\! \ans(r^\prime)}
\]
Assume interpretation $I$ to be a model of the universal closures of the premises of the rule, but not a model of the universal closure of its conclusion.
Then there is some variable assignment $v$ such that $(D\lor r\!\not\simeq\! r^\prime\lor C \lor C^\prime \lor \ans(r))\theta$ is false in $I$ under $v$.
Let $v^\prime$ be a variable assignment that assigns to each variable $x$ the value that $x\theta$ has in $I$ under $v$.
Then:
\begin{enumerate}
    \item From $r\theta\not\simeq r^\prime\theta, C\theta, C^\prime\theta$ being false in $I$ under $v$ follows that $r\not\simeq r^\prime, C, C^\prime$ are false in $I$ under $v^\prime$. Therefore $r, r^\prime$ have the same interpretation in $I$ under $v^\prime$.
    \item Since $\ans(r)\theta$ is false in $I$ under $v$, also $\ans(r)$ is false in $I$ under $v^\prime$.
    Then from 1. follows that 
    $\ans(r^\prime)$ is also false in $I$ under $v^\prime$.
    \item Since $D\theta$ is false in $I$ under $v$, from $(\theta, D)$ being an abstract unifier of $A, A'$ follows that $A\theta, A^\prime\theta$ have the same interpretation in $I$ under $v$, and therefore $A, A^\prime$ have the same interpretation in $I$ under $v^\prime$.
    Therefore, only one of $A,\lnot A^\prime$ is true in $I$ under $v^\prime$, which together with $C, C^\prime, \ans(r), \ans(r^\prime)$ being false in $I$ under $v^\prime$ forms a contradiction with the assumption that $I$ is a model of both premises of the rule.
\end{enumerate}
Therefore the second binary resolution rule is sound as well.\hfill $\square$
\end{proof}

\begin{theorem}
\theoremfive
\end{theorem}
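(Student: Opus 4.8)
The plan is to prove correctness of Algorithm~\ref{alg:uwa} by arguing two things separately: (i) whenever $\mathtt{mgu_{comp}}(E_1,E_2,E_3)$ returns a pair $(\theta,D)$, that pair is indeed an \emph{abstract unifier} of $E_1,E_2$, i.e. $(D\lor E_1\simeq E_2)\theta$ is valid in the underlying theory; and (ii) the returned $\theta$ makes $E_3\theta$ computable. The second point is what distinguishes our algorithm from ordinary unification with abstraction, so I expect the bookkeeping about computability to be the main work, while the first point is essentially the standard correctness argument for Robinson-style unification augmented with the disequality-abstraction step of~\cite{UWA-THI}.

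For part (i) I would set up a loop invariant on the state $(\mathcal{E},\theta,\mathcal{D})$ of the \texttt{repeat} loop. Let $D_{\mathcal{D}}$ denote the disjunction of the literals currently in $\mathcal{D}$, and let $F_{\mathcal{E}}$ denote $\bigwedge_{(s=t)\in\mathcal{E}} s\simeq t$. The invariant I would maintain is: the universal closure of
\[
  \bigl(D_{\mathcal{D}} \lor \bigl(F_{\mathcal{E}} \rightarrow (E_1\simeq E_2)\bigr)\bigr)\theta
\]
is valid in the theory, together with the fact that $\theta$ has already been applied to $\mathcal{E}$ (so variables in $\mathrm{dom}(\theta)$ do not occur in $\mathcal{E}$). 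Initially $\mathcal{E}=\{E_1=E_2\}$, $\theta$ empty, $\mathcal{D}=\emptyset$, so the invariant is the trivial tautology $E_1\simeq E_2\rightarrow E_1\simeq E_2$. Then I would check each branch of the loop body preserves it: the ``coincide'' and ``decompose $f(\bar s)=f(\bar t)$'' cases are the classical congruence arguments; the ``$t$ is a variable, swap'' case is symmetry of $\simeq$; the variable-elimination case $\{s\mapsto t\}$ is the standard substitution lemma; the new case $s\mapsto f(x_1,\dots,x_n)$ with fresh $x_i$ plus added equations $x_i=t_i$ is sound because $f(x_1,\dots,x_n)\simeq t$ follows from $\bigwedge x_i\simeq t_i$ by congruence, so any model satisfying the new $\mathcal{E}$ under the new $\theta$ satisfies the old one; and the abstraction case, where we add $s\not\simeq t$ to $\mathcal{D}$ and drop the equation $s=t$, is sound because $s\not\simeq t \lor (s\simeq t)$ is a tautology — dropping the equation from $\mathcal{E}$ only weakens $F_{\mathcal{E}}$, and the freshly added disequality covers exactly the case where $s\simeq t$ fails. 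The ``fail'' branches produce no output so there is nothing to prove for them. On termination $\mathcal{E}=\emptyset$, so $F_{\mathcal{E}}$ is empty (true) and the invariant becomes validity of $(D_{\mathcal{D}}\lor E_1\simeq E_2)\theta$, which is condition~2 of an abstract unifier; condition~1 is immediate since $\theta$ is a substitution and $D$ a disjunction of disequalities.

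For part (ii), the computability of $E_3\theta$, I would argue by a second invariant: at every point in the loop, $E_3\theta$ is computable, where here $\theta$ is understood as the accumulated substitution composed with the application to $\mathcal{E}$ — more precisely, every variable that still occurs in $E_3$ after applying the current $\theta$ either (a) does not occur in the current $\mathcal{E}$ at all, or (b) occurs in $\mathcal{E}$ only in equations whose eventual resolution will replace it by a computable term. The key observations are: the initial \texttt{if $E_3$ is uncomputable then fail} guard ensures $E_3$ starts computable; the plain variable-elimination $\{s\mapsto t\}$ is only taken when ``$s$ does not occur in $E_3$ or $t$ is computable'', so it never inserts an uncomputable term into $E_3\theta$; the branch $s\mapsto f(x_1,\dots,x_n)$ with $f$ computable replaces $s$ by a term all of whose symbols so far are computable ($f$ and fresh variables), postponing the subterms $t_i$ to fresh variables $x_i$ that do \emph{not} occur in $E_3$, hence fall under case (a); and the abstraction branch does not extend $\theta$ at all, it only grows $\mathcal{D}$. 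The remaining branches either decompose without touching $\theta$ or fail. Hence the invariant is preserved, and on termination $E_3\theta$ is computable.

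The main obstacle I anticipate is making the computability argument airtight across the interaction between the $s\mapsto f(x_1,\dots,x_n)$ branch and later iterations: the fresh variables $x_i$ are pushed into $\mathcal{E}$, and if one of them is later processed it could in principle be bound to an uncomputable term — but crucially that binding would not propagate into $E_3\theta$, because $x_i$ is fresh and never occurs in $E_3$. Stating the invariant so that it tracks ``occurs in $E_3$'' rather than ``computable everywhere'' is what makes this go through cleanly; phrasing it carelessly (e.g. demanding all of $\mathcal{E}$ be computable) would be false. A minor secondary point is termination of the algorithm itself, which follows by the same multiset/variable-count measure as standard unification, noting that the $s\mapsto f(\bar x)$ step strictly reduces the number of non-$E_3$-relevant uncomputable function symbols or else the usual measure, and the abstraction step strictly shrinks $\mathcal{E}$.
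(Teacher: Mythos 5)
Your part (i) (the abstract-unifier property) is fine, and is argued by a different route than the paper -- a loop invariant on $(\mathcal{E},\theta,\mathcal{D})$ rather than the paper's positional argument that the top-level symbols of $E_1\theta$ and $E_2\theta$ can only differ at a position covered by a disequality in $D$; both work. The genuine gap is in part (ii), and it sits exactly at the point you flagged as the main obstacle. The branch $\theta := \theta\circ\{s\mapsto f(x_1,\dots,x_n)\}$ is reached only when the preceding test ``$s$ does not occur in $E_3$ or $t$ is computable'' has \emph{failed}, i.e.\ precisely when $s$ \emph{does} occur in $E_3$ and $t$ is uncomputable. Hence after this step the fresh variables $x_i$ do occur in $E_3\theta$, and a later extension $\theta\circ\{x_i\mapsto u\}$ with $u$ uncomputable is applied to the range of $\theta$, so $u$ does land inside $E_3\theta$. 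Your resolution -- that such a binding ``would not propagate into $E_3\theta$ because $x_i$ is fresh and never occurs in $E_3$'' -- is therefore false, and under the literal reading of the occurrence test (against the original $E_3$) that you adopt, the theorem itself would fail: unifying $f(g(x))$ with $y$ w.r.t.\ $E_3=f(y)$ ($f$ computable, $g$ uncomputable) would first give $y\mapsto f(x_1)$ and then, since $x_1$ does not occur in the original $E_3$, bind $x_1\mapsto g(x)$, returning an uncomputable $E_3\theta$ instead of the intended $(\{y\mapsto f(z)\},\ z\not\simeq g(x))$ given as the paper's own example of a computable unifier.

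The missing idea is that the occurrence test must be understood relative to $E_3$ under the current accumulated substitution (equivalently: fresh variables introduced in place of a protected variable are themselves protected). With that reading, the invariant becomes the simple statement the paper's proof relies on -- $\theta$ is never extended by $s\mapsto t$ with $t$ uncomputable when $s$ occurs in $E_3\theta$ -- so the worrying case never arises: a later equation $x_i = t_i$ with $t_i$ uncomputable is either decomposed further (computable head) or abstracted into $\mathcal{D}$, never turned into a binding. Relatedly, case (b) of your auxiliary invariant (``occurs in $\mathcal{E}$ only in equations whose eventual resolution will replace it by a computable term'') is forward-looking and cannot be verified locally at each loop iteration; once the occurrence test is tracked through $\theta$, you can drop it and keep only the direct statement that $E_3\theta$ is computable.
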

\begin{proof}
We will denote the subexpression of the expression $E$ at position $p$ by $E|p$.

We first prove that $(\theta, D)$ is an abstract unifier of $E_1, E_2$.
If $E_1\theta|p'$ and $E_2\theta|p'$ differ, there has to be a position $p$, where $p'$ is a prefix of $p$, such that the top-level symbol of $E_1\theta|p$ and $E_2\theta|p$ differs.
From the construction of $\theta$ follows that for any position $p$, the subexpressions $E_1\theta|p, E_2\theta|p$ differ in their top-level symbol only if $E_1|p = s$ and $E_2|p = f(t_1,\dots,t_n)$ (or, symmetrically, $E_1|p = f(t_1,\dots,t_n)$ and $E_2|p = s$) where $s$ is a variable and $f$ is uncomputable.
However, in this case $s\not\simeq f(t_1,\dots,t_n)$ occurs in $D$.
Therefore, for any interpretation $I$, any variable assignment $v$, and any position $p'$, the interpretations of $E_1\theta|p', E_2\theta|p'$ in $I$ under $v$ will either be the same, or $s\theta\not\simeq f(t_1,\dots,t_n)\theta$ will be true in $I$ under $v$.
Hence, $(D\lor E_1\simeq E_2)\theta$ is valid, and therefore $(\theta, D)$ is an abstract unifier of $E_1, E_2$.

Next, we prove that $E_3\theta$ is computable.
Since the algorithm successfully terminated, $E_3$ must have been computable (otherwise it would fail).
Further, the algorithm only extends the substitution $\theta$ by $s\mapsto t$ where $t$ is uncomputable if $s$ does not occur in $E_3$.
Thus, $E_3\theta$ is computable, and hence $(\theta, D)$ is a computable unifier.
\end{proof}

\section{Example}\label{sec:appendix-examples}
\setcounter{example}{2}
A detailed version of Example~\ref{ex:inv2} follows.
\begin{example}
Consider the group axioms (A1)-(A3) of Figure~\ref{fig:groupax}, the additional axioms (A1')~$\forall x.\ x*i(x)\simeq e$ for right inverse and (A2')~$\forall x.\ x*e\simeq x$ for right identity (symmetric to (A1), (A2)),\footnote{We include the symmetric axioms to shorten the derivation for presentation purposes. The derivation would work also without (A1'), (A2').}
and the following specification
\begin{equation}
    \forall x, y.\exists z.\ z*(i(x)*i(y)) = e,
\end{equation}
describing the inverse element of $i(x)*i(y)$.
The trivial program derivation for this specification would only have three steps:
{\footnotesize
\begin{enumerate}
    \item $e\not\simeq x*(i(\sigma_1)*i(\sigma_2))\lor\ans(x)$ \hfill [preprocessed specification]
    \item $\ans(i(i(\sigma_1)*i(\sigma_2)))$ \hfill [BR A1, 1.]
    \item $\square$ \hfill [answer literal removal 2.]
\end{enumerate}}%
To disallow the trivial solution, $i(i(x)*i(y))$, we annotate the function symbol $i$ as uncomputable.
Therefore we do not perform the step 2. from above, but instead perform binary resolution with the computable unifier $(\{x\mapsto i(i(\sigma_1)*i(\sigma_2))\}, x\not\simeq i(i(\sigma_1)*i(\sigma_2)))$, leading to the following derivation:
{\footnotesize
\begin{enumerate}
    \item $e\not\simeq x*(i(\sigma_1)*i(\sigma_2))\lor\ans(x)$ \hfill [preprocessed specification]
    \item $i(i(\sigma_1)*i(\sigma_2))\not\simeq x\lor\ans(x)$ \hfill [BR A1, 1.]
    \item $i(x)*(x*y) \simeq e*y$ \hfill [Sup A3, A1]
    \item $i(x)*(x*y) \simeq y$ \hfill [Sup 3., A2]
    \item $i(i(x))*e \simeq x$ \hfill [Sup 4., A1]
    \item $i(i(x)) \simeq x$ \hfill [Sup 5., A2']
    \item $e \simeq x*(y*i(x*y))$ \hfill [Sup A1', A3]
    \item $x*i(y*x) \simeq i(x)*e$ \hfill [Sup 4., 7.]
    \item $i(x) = y*i(x*y)$ \hfill [Sup 8., A2']
    \item $i(x)*i(y) = i(y*x)$ \hfill [Sup 4., 9.]
    \item $i(i(\sigma_2,\sigma_1))\not\simeq x\lor\ans(x)$ \hfill [Sup 10., 2.]
    \item $\ans(\sigma_2*\sigma_1)$ \hfill [BR 6., 11.]
    \item $\square$ \hfill [answer literal removal 12.]
\end{enumerate}}%
We synthesize the program $P[x,y] := y*x$.
Note that there exists a different derivation only using computable unifiers in the form $(\theta, \square)$ (i.e., not using abstraction).
\end{example}

\section{Splitting and AVATAR}\label{sec:appendix-avatar}
One of the keys to the efficiency of saturation-based theorem proving is \emph{clause splitting}, with the leading approach being the AVATAR architechture~\cite{AVATAR14}.
The main idea of splitting is as follows.
Let $S$ be a set of clauses and $C_1\lor C_2$ a clause such that $C_1, C_2$ have disjoint sets of variables.
We call such clauses $C_1, C_2$ the \emph{components} of $C_1\lor C_2$.
Then $S\cup \{C_1\lor C_2\}$ is unsatisfiable iff both $S\cup\{C_1\}$ and $S\cup\{C_2\}$ are unsatisfiable.
Therefore, instead of checking satisfiability of a set of large clauses, we  check the satisfiability of multiple sets of smaller clauses.
AVATAR implements this idea by using an interplay between a saturation-based first-order theorem prover and a SAT/SMT solver.
The SAT/SMT solver finds a set of clause components, satisfiability of which implies satisfiability of all split clauses.
These components, called \emph{assertions}, are then used by the theorem prover for further derivations in saturation.
All clauses derived using assertions $C_1,\dots,C_n$ are called \emph{clauses with assertion} $C_1,\dots,C_n$.

However, when using Algorithm~\ref{alg:saturation-new} for program synthesis with (standard) AVATAR to saturate a preprocessed specification~\eqref{eq:spec2ans}, we may  derive a clause $\ans(r[\overline{\sigma}])$ with assertions $C_1[\overline{\sigma}],\dots,C_m[\overline{\sigma}]$. By Theorem~\ref{thm:1}, we then obtain $$A_1,\dots,A_n,C_1[\overline{\sigma}],\dots,C_m[\overline{\sigma}]\vdash F[\overline{\sigma},r[\overline{\sigma}]].$$
If $C_1[\overline{\sigma}],\dots,C_m[\overline{\sigma}]$ are computable and ground, then $\prog{r[\overline{x}]}{\bigwedge_{i=1}^{m} C_i[\overline{x}]}$ is a program with conditions.
However, if not all of the assertions $C_1[\overline{\sigma}],\dots,C_m[\overline{\sigma}]$ are computable and ground, then Algorithm~\ref{alg:saturation-new} should continue reasoning with these assertions with the aim of reducing them to computable and ground literals.
This, however, is not directly possible in the AVATAR framework.
Hence, we modify AVATAR as described in Section~\ref{sec:experiments}.

\end{document}